\documentclass[11pt,american,twoside,a4paper]{article}

\usepackage[T1]{fontenc}
\usepackage[utf8]{inputenc}
\usepackage{textalpha}
\usepackage[american,provide=*]{babel}
\usepackage{a4wide}
\usepackage{fourier,bbm,bm}
\usepackage{BOONDOX-calo}
\usepackage{amsmath,amsfonts,amsthm,amssymb}
\usepackage{color,xcolor,graphicx}
\usepackage{latexsym,mathtools,leftindex}
\usepackage[shortlabels]{enumitem}
\usepackage{xargs,xspace,needspace,xparse,easyReview}
\usepackage[active]{srcltx}
\usepackage[colorlinks=true]{hyperref}
\usepackage{fancyhdr}
\usepackage[toc]{appendix}
\usepackage[normalem]{ulem}
\usepackage[style=cap-alpha]{biblatex}

\colorlet{darkblue}{blue!50!black}
\hypersetup{
	colorlinks,%
	citecolor=darkblue,%
	filecolor=red,%
	linkcolor=darkblue,%
	urlcolor=darkblue,%
	pdfnewwindow=true,%
	pdfstartview={FitH}
}

\numberwithin{equation}{section}
\theoremstyle{plain}
\newtheorem{theorem}{Theorem}[section]
\AtBeginEnvironment{theorem}{\Needspace{5\baselineskip}}
\newcommand\bet{\begin{theorem}} \newcommand\eet{\end{theorem}}
\newtheorem{proposition}[theorem]{Proposition}
\AtBeginEnvironment{proposition}{\Needspace{5\baselineskip}}
\newcommand\bep{\begin{proposition}} \newcommand\eep{\end{proposition}}
\newtheorem{lemma}[theorem]{Lemma}
\AtBeginEnvironment{lemma}{\Needspace{5\baselineskip}}
\newcommand\bel{\begin{lemma}} \newcommand\eel{\end{lemma}}
\newtheorem{corollary}[theorem]{Corollary}
\AtBeginEnvironment{corollary}{\Needspace{5\baselineskip}}
\newcommand\bec{\begin{corollary}} \newcommand\eec{\end{corollary}}
\theoremstyle{definition}
\newtheorem{definition}[theorem]{Definition}
\AtBeginEnvironment{definition}{\Needspace{5\baselineskip}}
\newcommand\bed{\begin{definition}} \newcommand\eed{\end{definition}}
\newtheorem{remark}{Remark}[section]
\AtBeginEnvironment{remark}{\Needspace{5\baselineskip}}
\newcommand\ber{\begin{remark}} \newcommand\eer{\end{remark}}
\newtheorem{remarks}[remark]{Remarks}
\AtBeginEnvironment{remarks}{\Needspace{5\baselineskip}}
\newcommand\bers{\begin{remarks}} \newcommand\eers{\end{remarks}}
\AtBeginEnvironment{quote}{\Needspace{5\baselineskip}}

\newcommand\beq{\begin{equation}} \newcommand\eeq{\end{equation}}

\newcounter{smallarabics}
\newenvironment{arabicenumerate}
{\begin{list}{{\normalfont\textbf{(\arabic{smallarabics})}}}
		{\usecounter{smallarabics}\setlength{\itemindent}{0cm}
			\setlength{\leftmargin}{5ex}\setlength{\labelwidth}{4ex}
			\setlength{\topsep}{0.75\parsep}\setlength{\partopsep}{0ex}
			\setlength{\itemsep}{0ex}}}
	{\end{list}}
\newcommand{\ben}{\begin{arabicenumerate}} \newcommand{\een}{\end{arabicenumerate}}

\newcounter{smallroman}
\newenvironment{romanenumerate}
{\begin{list}{{\normalfont\textbf{(\roman{smallroman})}}}
		{\usecounter{smallroman}\setlength{\itemindent}{0cm}
			\setlength{\leftmargin}{5ex}\setlength{\labelwidth}{4ex}
			\setlength{\topsep}{0.75\parsep}\setlength{\partopsep}{0ex}
			\setlength{\itemsep}{0ex}}}
	{\end{list}}
\newcommand{\benr}{\begin{romanenumerate}} \newcommand{\eenr}{\end{romanenumerate}}

\newcommandx{\CA}[2][1=]{\todo[inline,author={ca},
	linecolor=green,backgroundcolor=green!15,bordercolor=green,#1]{#2}}
\newcommandx{\canot}[2][1=]{\todo[author={ca},
	linecolor=green,backgroundcolor=green!15,bordercolor=green,#1]{#2}}

\NewDocumentCommand{\ASS}{mm}{\expandafter\newcommand\csname #1\endcsname{{\hyperref[#1]{\bf (#2)}}}}
\NewDocumentCommand{\preASS}{mm}{\expandafter\newcommand\csname pre#1\endcsname{{\hyperref[#1]{\bf (#2)}}}}

\NewDocumentCommand{\CONJ}{mm}{\expandafter\newcommand\csname #1\endcsname{{\hyperref[#1]{\bf #2}}}}
\NewDocumentCommand{\preCONJ}{mm}{\expandafter\newcommand\csname pre#1\endcsname{{\hyperref[#1]{\bf #2}}}}
\newcommand{\conjecture}[3]{
	\begin{quote}\label{#2}{\bf#1\, }%
		#3%
	\end{quote}%
	\CONJ{#2}{#1}}

\newcommand\cA{{\mathcal A}} \newcommand\cB{{\mathcal B}} \newcommand\cC{{\mathcal C}}
  \newcommand\cF{{\mathcal F}}
 \newcommand\cH{{\mathcal H}} \newcommand\cI{{\mathcal I}}
 \newcommand\cK{{\mathcal K}} 
  
 \newcommand\cQ{{\mathcal Q}} \newcommand\cR{{\mathcal R}}
\newcommand\cS{{\mathcal S}} \newcommand\cT{{\mathcal T}}

 \newcommand\ce{{\mathcal e}}

\newcommand\cp{{\mathcal p}}  
\newcommand\cs{{\mathcal s}}

\renewcommand\AA{{\mathbbm A}}
\newcommand\NN{{\mathbbm N}}
\newcommand\ZZ{{\mathbbm Z}}

\newcommand\RR{{\mathbbm R}}
\newcommand\CC{{\mathbbm C}}

\newcommand{\one}{\mathbbm{1}}

\newcommand\fA{{\mathfrak A}} \newcommand\fB{{\mathfrak B}} \newcommand\fC{{\mathfrak C}}
  
  \newcommand\fI{{\mathfrak I}}

\newcommand\e{\mathrm{e}}

\renewcommand{\i}{\mathrm{i}}
\renewcommand\d{\mathrm{d}}

\newcommand{\Id}{\mathrm{Id}}

\newcommand{\argdot}{\,{\bm\cdot}\,}

\newcommand\myoperator[1]{\mathop{\mathrm{#1}}}

\newcommand\wstarlim{\myoperator{w^\ast-lim}\limits}
\newcommand\Dom{\myoperator{Dom}\nolimits}
\newcommand\Ker{\myoperator{Ker}\nolimits}
\newcommand\Ran{\myoperator{Ran}\nolimits}
\renewcommand\dim{\myoperator{dim}\nolimits}

\renewcommand\Im{\myoperator{Im}\nolimits}
\newcommand\sgn{\myoperator{sgn}\nolimits}

\newcommand\tr{\myoperator{tr}\nolimits}

\newcommand{\Aut}{\myoperator{Aut}\nolimits}

\newcommand{\dist}{\myoperator{dist}\nolimits}
\newcommand\supp{\myoperator{supp}\nolimits}

\newcommand\wstar[1]{$\mathrm{w}^\ast\!\!\mathrm{-#1}$}

\newcommand{\ds}{\displaystyle}
\renewcommand\proof{\noindent{\bf Proof.\ }}
\renewcommand\qed{\hfill$\Box$}
\newcommand\ie{\textsl{i.e.,\ }}
\renewcommand\bar{\overline}
\newcommand\ubar{\underline}
\newcommand{\twol}[2]{\genfrac{}{}{0pt}{1}{#1}{#2}}

\newcommand\eq{\mathrm{eq}}
\newcommand\req{\mathrm{req}}
\newcommand\re{\mathrm{sa}}
\newcommand\loc{\mathrm{loc}}

\newcommand{\diam}{\oslash}
\newcommand\Lim{\lim_\Lambda}
\newcommand\Llim{\lim_{\{\Lambda\}}}
\newcommand\fAloc{\fA_\loc}
\newcommand\fAr{\fA_\re}
\newcommand\cCphys{\mathcal{C}_\mathrm{phys}}
\newcommand\fCphys{\mathfrak{C}_\mathrm{phys}}
\newcommand\cBf{\cB_\mathrm{f}}
\newcommand\cBs{\cB_\mathrm{s}}
\newcommand\cBb{\cB_\mathrm{b}}
\newcommand\cBr{\cB^r}
\newcommand\cBsd{\cB_\mathrm{sd}}
\newcommand\cBR{\cB_\mathrm{R}}
\newcommand\cBdiam[1]{\cB^{\diam^{#1}}}
\newcommand\cSI{\cS_\mathrm{I}}
\newcommand\cSeq{\cS_\mathrm{eq}}
\newcommand\cSwg{\cS_\mathrm{wg}}
\newcommand\cSlwg{\cS_\mathrm{lwg}}
\newcommand\cSreq{\cS_\mathrm{req}}
\newcommand\cSKMS{\cS_\mathrm{KMS}}
\newcommand\cSphys{\cS_\mathrm{phys}}

\begin{document}
\renewcommand\today{}
\title{Approach to Equilibrium in Translation-Invariant\\
Quantum Systems: Structural Results II}
\author{V. Jak\v{s}i\'c$^{1}$, C.-A. Pillet$^{2,3}$, A. Szczepanek$^{4}$
	\\ \\ \\ 
	$^{1}$Dipartimento di Matematica, Politecnico di Milano\\
	piazza Leonardo da Vinci 32, 20133 Milano, Italy
	\\ \\
	$^2$Université de Toulon, CNRS, CPT, UMR 7332, 83957, La Garde, France
	\\ \\ 
	$^3$Aix-Marseille Univ, CNRS, CPT, UMR 7332, Case 907, 13288, Marseille, France
	\\ \\
	$^4$Institute of Mathematics, Jagiellonian University \\
	Łojasiewicza 6, 30-348 Kraków, Poland
}
\maketitle
\thispagestyle{empty}

\begin{flushright}
	\bigskip 
	
ἓν ηὐξήθη μόνον εἶναι ἐκ πλεόνων.

\smallskip
\emph{At one time they grew to be one alone out of many.}

\smallskip
Empedocles, \emph{On Nature}, fr.~B17.1--2
\end{flushright}

\bigskip\noindent{\small{\bf Abstract.}
We continue the development, started in
\href{https://doi.org/10.1007/s00023-023-01281-6}{[AHP~25~(2023), 715--749]}, of
a structural theory of approach to  equilibrium in translation-invariant quantum
spin systems. Dynamical Steady States (DSS) are defined as \wstar{limit\ points} 
of the time-averaged evolution of an initial state. We show that if a
DSS is an equilibrium state of a sufficiently regular interaction,
then the specific energy of this interaction is necessarily a constant of the
motion. Two distinct mechanisms lead to this conclusion, one based on dynamical
surface energies and the other on conservation of specific relative entropy. We
conjecture that these properties hold for a natural class of interactions.

For a DSS admitting a physical equilibrium description, the associated
equilibrium states are characterized by a dynamical Gibbs variational principle:
they are precisely the minimizers of the specific relative entropy from the
initial state over the physical equilibrium states. The achieved minimum
equals the increase of specific entropy.

The structural theory also yields characterizations of approach to thermal 
equilibrium, based on two model-dependent questions: restoration of a 
sufficiently local thermodynamic description at large times and 
classification of the corresponding local constants of motion. Once locality 
is restored, the constants of motion
determine the possible asymptotic equilibrium states. In particular, if
finite-range locality is restored and the energy is the only finite-range
constant of the motion, then every DSS is a thermal
equilibrium state at the temperature fixed by energy conservation. The
finite-time dynamical input underlying this theory is established in a
companion paper.
\par
}
\newpage

\tableofcontents

\newpage
\section{Introduction}

This paper is a direct continuation of~\cite{Jaksic2023a} and concerns the
development of a structural framework for approach to equilibrium in
translation-invariant lattice quantum spin systems. It is part of a series of
four papers devoted to this problem. The companion papers develop the
finite-time dynamical input~\cite{Jaksic2026b} to the structural theory
developed in this work, study the finite-range constants of motion in the XYZ
models~\cite{Jaksic2026c},  and finally illustrate the theory through the analysis of a class
of simple models~\cite{Jaksic2026e}.\footnote{At present, rigorous
large-time analysis is restricted to simple systems.} This series forms part of a broader research
program~\cite{Jaksic2026d}, briefly outlined at the end of this
Introduction.

The kinematical thermodynamic formalism of spin 
systems~\cite{Ruelle1978,Israel1979,Simon1993,Bratteli1981,Van_Enter1993} nee\-ded
for our study is reviewed in detail in Section~\ref{sec-QSS}, where we follow the
notation introduced in~\cite{Jaksic2023a}. We will not review this material in
the present Introduction. Instead, we start directly with a description of the
problem, a brief account of the results obtained in~\cite{Jaksic2023a}, and the
statement of the main results and conjectures of this work, introducing only the
notation needed for their formulation.

Let $\fA$ be the quantum spin $C^\ast$-algebra with configuration lattice
$\ZZ^d$ and finite-dimensional single spin Hilbert space $\cH$. $\cSI$ denotes
the set of all translation-invariant states on $\fA$ and $\cB_\mathrm{b/s}$ 
the usual big and small spaces of interactions. The specific energy and the
set of equilibrium states of an interaction $\Phi\in\cBb$ are 
denoted by $E(\Phi)$ and $\cSeq(\Phi)$. We denote by $\cBsd\subset\cBs$ the
class of all interactions $\Phi$ that generate
a $C^\ast$-dynamics, \ie a strongly continuous one-parameter group 
$(\tau_\Phi^\theta)_{\theta\in\RR}$ of $\ast$-automorphisms of $\fA$.
The work~\cite{Jaksic2023a} 
was focused on the space of exponentially decaying
interactions $\cB^r$, which is contained in $\cBsd$.  
The diameter spaces $\cBdiam{r}$, which are contained in $\cBsd$ for $r>d$, will also play an important role in this work.
Our starting point is a quantum dynamical system $(\fA,\tau_\Phi,\omega)$, with
$\Phi\in\cBsd$, the group $\tau_\Phi$ describing the time evolution 
of the system in the Heisenberg picture, and $\omega\in\cSI$
its state at time $t=0$.  The basic thermodynamic regularity assumption on this initial state is
that $\omega\in\cSeq(\Psi_0)$ for some interaction $\Psi_0\in\cBsd$.
To avoid discussion of trivialities, we will always assume that 
$\omega$ is not $\tau_\Phi$ invariant, and for $\nu\in\cSI$ we write
\beq
\nu_t=\nu\circ\tau_\Phi^t,\qquad
\bar\nu_T=\frac1T\int_0^T\nu_t\d t.
\label{eq:DSSDef}
\eeq
$\Sigma_+(\omega,\Phi)$ denotes the set of \wstar limit
points of the net $(\bar\omega_T)_{T>0}$. Since $\fA$ is separable,
$\omega_+\in\Sigma_+(\omega,\Phi)$ iff there exists a sequence $T_n\uparrow\infty$ such
that
\[
\omega_+(A)=\lim_{n\to\infty}\frac{1}{T_n}\int_0^{T_n}\omega\circ\tau_\Phi^t(A)\d t
\]
for all $A\in\fA$. Whenever the meaning is clear within the context, we will write
$\Sigma_+$ for $\Sigma_+(\omega,\Phi)$. The set $\Sigma_+$ is non-empty and its
elements, which are $\tau_\Phi$-invariant and translation-invariant, are called 
\emph{Dynamical Steady States}, abbreviated DSS.\footnote{They were called 
\emph{equilibrium steady states} (ESS) in~\cite{Jaksic2023a}.} 
In physically relevant models one expects that $\Sigma_+$ is a singleton.
\footnote{This is generally a model-dependent large-time question. In $d=1$, 
finite-range restoration of locality implies uniqueness of the DSS; see 
Theorem~\ref{thm:fr-rol-1d}.}

The first studies of approach to equilibrium in mechanical models go back to the
works of Maxwell and Boltzmann in the second half of the nineteenth century, who
recognized the central role this problem plays in foundations of thermodynamics.
The history of the problem from the perspective of our research program is
reviewed in the lecture notes~\cite{Jaksic2026d}. In a nutshell, our goal is to
understand under what conditions the map
\[
\omega\mapsto\Sigma_+
\]
corresponds to physical approach to equilibrium in the setting of quantum spin systems. 

As expected, symmetries and conservation laws play a fundamental role in this study. The
starting point is the finite-time conservation of the specific entropy and of
the specific energy of $\Phi$: for $\nu\in\cSI$ and $t\in\RR$,
$$
s(\nu_t)=s(\nu),\qquad\nu_t(E(\Phi))=\nu(E(\Phi)).
$$
They imply that for all $T>0$,\footnote{For the proof of the first relation 
see Proposition~2.16(1) in~\cite{Jaksic2023a}.} 
\[
s(\bar\nu_T)=s(\nu),\qquad 
\bar\nu_T(E(\Phi))=\nu(E(\Phi)).
\]
For $\Phi\in\cB^r$ the conservation of specific entropy was proven in 1968 by 
Lanford and Robinson~\cite{Lanford1968}. This result is extended 
to $\Phi\in\cBdiam{r}$ with $r>2d$ in~\cite{Jaksic2026b}. The conservation 
of specific energy was established for $\Phi\in\cBsd$ in~\cite{Jaksic2023a}.
\footnote{Although this result was stated there for $\Phi\in\cB^r$, the proof
extends directly to $\Phi\in\cBsd$.} It is an open question to characterize
the elements of $\cBsd$ for which the conservation of the specific entropy holds. 

For the rest of this introduction we assume that $\Phi\in\cBsd$ is such that 
conservation of the specific entropy holds. An immediate consequence of the 
upper-semicontinuity of the map $\cSI\ni\nu\mapsto s(\nu)$ is that for any 
$\omega_+\in \Sigma_+$, 
\[
s(\omega_+)\geq s(\omega).
\]
The validity of the strict inequality is known as the \emph{question of
Ruelle}~\cite{Ruelle1967}; see~\cite{Jaksic2023a} for a discussion. A related
observation concerns the interaction $\Psi_0$ associated to the initial state
$\omega$. Proposition~\ref{prop:energy-monotonicity}, which follows from the
Gibbs variational principle, gives, for any $\omega_+\in\Sigma_+$,
\[
\omega_+(E(\Psi_0))\geq \omega(E(\Psi_0)).
\]
The strict forms of the last two inequalities constitute the two
\emph{signatures of irreversibility},
\[
s(\omega_+)>s(\omega),\qquad
\omega_+(E(\Psi_0))>\omega(E(\Psi_0)).
\]
The work~\cite{Jaksic2023a} identified conditions under which these signatures
hold, with regularity and weak Gibbsianity playing an important structural
role. We will return to these points in the sequel.

The intuition that thermal equilibrium maximizes entropy subject to the conserved quantities 
goes back to Boltzmann's $H$-theorem and to Gibbs' variational principle. 
In our context this heuristic statement is formalized as follows. 

We say that a self-adjoint element $C\in\fA$ is a constant of motion for $\tau_\Phi$ if,
for all $\nu\in\cSI$ and $t\in\RR$,
\[
\nu_t(C)=\nu(C).
\]
$\fC(\Phi)$ denotes the set of all constants of motion for $\tau_\Phi$. For simplicity, 
in this Introduction we suppress the physical equivalence entering the precise definition; 
the latter is formulated in terms of equivalence classes in Section~\ref{sec:CM}.
$\fC(\Phi)$ is a closed real subspace of $\fA$ and $E(\Phi)\in\fC(\Phi)$.
The importance of $\fC(\Phi)$ in the study of approach to equilibrium stems 
from the obvious fact that for any $\omega_+\in\Sigma_+$ and $C\in\fC(\Phi)$, 
\[
\omega(C)=\omega_+(C).
\] 
Set 
\[
\cS_{\omega,\Phi}=\{\nu\in\cSI\mid\nu(C)=\omega(C)\text{ for all } C\in\fC(\Phi)\},
\]
and note that $\omega,\omega_+\in\cS_{\omega,\Phi}$. 

For our purposes, the following elementary observation is central.
\bep\label{prop-f-ent}
Suppose that $\omega_+\in\Sigma_+$ is such that, for some $\Psi_+\in\cBb$, one has 
$\omega_+\in\cSeq(\Psi_+)$ and $E(\Psi_+)\in\fC(\Phi)$. Then 
\[
s(\omega_+)=\max\{s(\nu)\mid\nu\in\cS_{\omega,\Phi}\}.
\]

\eep
\proof 
Let $\nu\in\cS_{\omega,\Phi}$. Since $E(\Psi_+)\in{\mathfrak C}(\Phi)$, the definition of
$\cS_{\omega,\Phi}$ and the fact that $\omega_+\in\Sigma_+$ give
\[
\nu(E(\Psi_+))=\omega(E(\Psi_+))=\omega_+(E(\Psi_+)).
\]
By the Gibbs variational principle, $s(\nu)-\nu(E(\Psi_+))\leq P(\Psi_+)$. 
On the other hand, since $\omega_+\in\cSeq(\Psi_+)$, $P(\Psi_+)=s(\omega_+)-\omega_+(E(\Psi_+))$. 
Combining the last three relations yields $s(\nu)\leq s(\omega_+)$. 
Since $\omega_+\in\cS_{\omega,\Phi}$, the maximum is attained at
$\omega_+$. \qed

This raises the question of when one can expect
\[
\omega_+\in\cSeq(\Psi_+),\qquad E(\Psi_+)\in{\mathfrak C}(\Phi),
\]
to hold for some physically reasonable interaction $\Psi_+$. 
The two parts of this assumption are of a rather different nature. 
\begin{itemize}
\item The first one is a large-time question. It requires the limiting DSS
$\omega_+$ to have a thermodynamic description. Such a circumstance cannot be
expected to follow from general structural considerations and has to be
established by a detailed analysis of concrete models. There are simple examples
where it fails.\footnote{Examples with finite-range classical interactions $\Phi$
are discussed in~\cite{Jaksic2026c,Jaksic2026d}.}

\item One of the main results of the present work is that, under suitable
structural assumptions, the second part follows from the first one. In other
words, once a DSS admits a sufficiently regular thermodynamic description, the
specific energy of the associated interaction is forced to be a constant of
motion for the driving dynamics $\tau_\Phi$.\footnote{The minimal regularity our
theory aims for is $\Psi_+\in\cBsd$. We will often return to this point in the
sequel.} Needless to say, if $\Sigma_+$ does not contain a DSS $\omega_+$
admitting a sufficiently regular thermodynamic description, then the dynamical
system $(\fA,\tau_\Phi,\omega)$ fails to describe approach to 
equilibrium.
\end{itemize}

The second requirement, that the specific energy of the interaction $\Psi_+$ is
a constant of motion for $\tau_\Phi$ can be rephrased in terms of Ruelle's
notion of physical equivalence;  see Section~\ref{sec:CM}. In particular,
additive constants are trivial under physical equivalence, and all statements
below concerning generators of spaces of constants of motion are understood in
this quotient sense.

Under our standing assumption of conservation of specific entropy, a
$\tau_\Phi$-invariant interaction in $\cBsd$ generates a dynamics commuting
with $\tau_\Phi$; see Theorem~\ref{thm:main2-const}.

Our first main result can now be formulated as follows.

\bet
Let $\omega_+\in\Sigma_+(\omega,\Phi)\cap\cSeq(\Psi_+)$ with
$\Psi_+\in\cBsd$. If the pair $(\Phi,\Psi_+)$ has the
\textbf{(SD)} property, then $E(\Psi_+)\in\fC(\Phi)$. 
\eet

The~\textbf{(SD)} property, defined precisely in Section~\ref{sec:CDC}, 
pertains to a pair $(\Phi,\Psi)$ of interactions. Roughly speaking, it ensures 
that, once evolved with the dynamics $\tau_\Phi^{-t}$ for small enough $t$,
the finite-volume Hamiltonians associated to $\Psi$ have well-defined surface 
energies which vanish per unit volume in the van~Hove limit, and, in this limit, 
generate the dynamics $\tau_\Phi^{-t}\circ\tau_\Psi\circ\tau_\Phi^t$. 
Thus~\textbf{(SD)} is a finite-time property not involving DSS or large-time limits.

We conjecture that any pair $(\Phi,\Psi)\in\cBsd\times\cBsd$ has the
\textbf{(SD)} property. The companion paper~\cite{Jaksic2026b} proves this
property when $(\Phi,\Psi)\in\cBdiam{r}\times\cBdiam{s}$ with $r>2d$
and $s>d$, and in particular for all finite-range pairs.

There is a second distinct route to the conclusion that
$E(\Psi_+)\in\fC(\Phi)$, based on specific relative entropy.

Recalling a notion introduced in~\cite{Jaksic2023a}, a pair
$(\eta,\Psi)\in\cSI\times\cBb$ is said to be \emph{regular} if, for every
$\nu\in\cSI$, the specific relative entropy $s(\nu|\eta)$ exists and satisfies
\beq
s(\nu|\eta)=-s(\nu)+\nu(E(\Psi))+P(\Psi).
\label{intro-regular-pair}
\eeq
This identity may be viewed as a relative-entropy counterpart of the Gibbs
variational principle. In particular, if $\eta\in\cSeq(\Psi)$, then
\[
s(\nu|\eta)=0\quad\Leftrightarrow\quad\nu\in\cSeq(\Psi).
\]
Recall our standing assumption that $\tau_\Phi$ preserves the specific entropy.
It then follows immediately from~\eqref{intro-regular-pair} that, for a regular
pair $(\eta,\Psi)$,
\[
s(\nu_t|\eta)=s(\nu|\eta)
\quad\hbox{for all $\nu\in\cSI$ and $t\in\RR$}
\]
if and only if $E(\Psi)\in\fC(\Phi)$. Under suitable assumptions,~\cite{Jaksic2026b} 
establishes conservation of the specific relative entropy along two state trajectories, 
\beq
s(\nu_t|\eta_t)=s(\nu|\eta).
\label{intro-re-conservation}
\eeq
If $\eta=\omega_+\in\Sigma_+$, then $\omega_+$ is $\tau_\Phi$-invariant 
and~\eqref{intro-re-conservation} becomes $s(\nu_t|\omega_+)=s(\nu|\omega_+)$. 
Consequently, whenever $(\omega_+,\Psi_+)$ is regular, the preceding
observation gives $E(\Psi_+)\in\fC(\Phi)$. In particular, when $\Phi\in\cBf$ 
this applies if $\|\Psi_+\|_r<r$ for some $r>0$, and also in dimension $d=1$ 
when $\Psi_+\in\cBf$; see~\cite{Jaksic2026b}.

The two mechanisms just described lead to two structural conjectures. The first
one is the \textbf{(SD)} conjecture stated above. The second one concerns
regularity and conservation of specific relative entropy. We conjecture that:
\textbf{(R)} for any $\Phi,\Psi\in\cBsd$ and $\eta\in\cSeq(\Psi)$, the pair
$(\eta,\Psi)$ is regular, and \textbf{(SE)}
\[
s(\nu\circ\tau_\Phi^t\,|\,\eta\circ\tau_\Phi^t)
=
s(\nu|\eta)
\]
for all $\nu\in\cSI$ and $t\in\RR$. We refer to these two statements as
Conjecture-\textbf{(R+SE)}. Conjectures-\textbf{(R)} and~\textbf{(SD)} together,
or Conjecture-\textbf{(R+SE)}, ensure that our theory applies whenever
$\omega_+\in\cSeq(\Psi_+)$ with $\Psi_+\in\cBsd$. In our opinion, these
conjectures are outstanding open problems in the kinematical and finite-time
theory of quantum spin systems whose formulation is \emph{a priori} unrelated to
the problem of approach to equilibrium studied here.

We now return to the variational interpretation of a DSS, and set
\[
\cBR=\left\{\Psi\in\cBb\,\middle|\,
(\nu,\Psi)\text{ is regular for every }\nu\in\cSeq(\Psi)\right\}.
\]
We further denote by $\cCphys(\Phi)$ the set of physical-equivalence
classes of interactions admitting a representative
$\Psi\in\cBsd\cap\cBR$ such that $E(\Psi)\in\fC(\Phi)$. 
Noting that $\cSeq(\Psi)$ only depends on the class $[\Psi]$ of $\Psi$,
we define \textsl{physical equilibrium states} as elements of
the set
\[
\cSphys(\Phi)=
\bigcup_{[\Psi]\in\cCphys(\Phi)}
\cSeq(\Psi); 
\]
a more precise formulation will be given in Section~\ref{sec:CM}. 

We shall refer to the next result as the \emph{dynamical Gibbs variational principle}.

\bet\label{thm:dGvp}
Let $\omega_+\in\Sigma_+(\omega,\Phi)$ be such that
$\omega_+\in\cSeq(\Psi_+)$ with $[\Psi_+]\in\cCphys(\Phi)$. Then, one has
\[
\inf_{\nu\in\cSphys(\Phi)}s(\omega|\nu)=s(\omega_+)-s(\omega),
\]
and the set of minimizers is precisely $\cSeq(\Psi_+)$.
\eet
This result complements Proposition~\ref{prop-f-ent} by providing a second
variational characterization of the equilibrium description of a DSS.
Proposition~\ref{prop-f-ent} is an
entropy-maximization statement on the constraint set determined by the constants
of motion. Theorem~\ref{thm:dGvp} is instead a relative-entropy minimization
statement on the set of physical equilibrium states: $\cSeq(\Psi_+)$ is
precisely the set of minimizers of $\nu\mapsto s(\omega|\nu)$ over
$\cSphys(\Phi)$. Whenever the quantum Stein lemma holds for the relevant pairs,
this characterization has the infor\-mation-theoretic interpretation that the
states in $\cSeq(\Psi_+)$ are the physical equilibrium states least
distinguishable from the initial state $\omega$ in asymptotic hypothesis
testing; see~\cite{Jaksic2012} and Theorem~\ref{thm:stein} below. There is also
a potential form of the dynamical Gibbs variational principle which does not
require regularity. It is technically useful, but does not have the direct
state-space interpretation above. We will discuss it in Section~\ref{sec:varcs}.

In dimension one and for finite-range interactions, the results are particularly
complete. For the finite-range class, \emph{finite-range restoration of
locality} means that every DSS $\omega_+\in\Sigma_+(\omega,\Phi)$ belongs to
$\cSeq(\Psi_+)$ for some $\Psi_+\in\cB_{\rm f}$.

\bet\label{thm-fin-r-mar}
Suppose that $d=1$, that $\Phi,\Psi_0\in\cB_{\rm f}$,
$\omega\in\cSeq(\Psi_0)$, and that $(\omega,\Phi)$ has the property of
finite-range restoration of locality. Then, the following hold:
\benr
\item $\Sigma_+(\omega,\Phi)=\{\omega_+\}$.
\item If $\omega_+\in\cSeq(\Psi_+)$ with $\Psi_+\in\cB_{\rm f}$, then
$[\Psi_+]\in\cCphys(\Phi)$.
\item The dynamical Gibbs variational principle holds and $\omega_+$ is its
unique minimizer.
\item $s(\omega_+)>s(\omega)$.
\eenr
\eet

At this point the structural picture can be summarized rather simply.
Suppose that
\[
\omega_+\in\Sigma_+(\omega,\Phi)\cap\cSeq(\Psi_+),
\qquad
\Psi_+\in\cBsd.
\]
If Conjectures-\textbf{(R)} and~\textbf{(SD)} hold, then $\Psi_+\in\cBR$ and
$E(\Psi_+)\in\fC(\Phi)$. Hence $[\Psi_+]\in\cCphys(\Phi)$ and the dynamical Gibbs
variational principle holds. The same conclusion follows directly from
Conjecture-\textbf{(R+SE)}. In~\cite{Jaksic2026b} the corresponding properties
are established in the regimes described above; in particular, they hold under
the assumptions of Theorem~\ref{thm-fin-r-mar}.

Thus, after separating off the structural properties of equilibrium states and
finite-time dynamics, one is left with a large-time question: does a DSS admit
an equilibrium description $\omega_+\in\cSeq(\Psi_+)$ with $\Psi_+$ belonging to
a sufficiently regular class of interactions? This part of the problem cannot be
decided by a general structural theory and has to be studied in concrete models.

We now turn to thermal equilibrium. Since
$E(\Phi)\in{\mathfrak C}(\Phi)$,
\[
e_\ast\coloneq\omega(E(\Phi))=\omega_+(E(\Phi))
\]
for every $\omega_+\in\Sigma_+$. We assume that $e_\ast$ belongs to the
interior of the range of the map
\[
\cSI\ni\nu\mapsto \nu(E(\Phi)).
\]
The Gibbs variational principle then determines the equilibrium inverse
temperature $\beta_\eq=\beta_\eq(\omega,\Phi)$ through the Legendre
duality relation
$$
\sup_{\substack{\nu\in\cSI\\ \nu(E(\Phi))=e_\ast}}s(\nu)
=
\inf_{\beta\in\RR}\left(P(\beta\Phi)+\beta e_\ast\right)
=
P(\beta_\eq\Phi)+\beta_\eq e_\ast,
$$
see Section~\ref{sec-sett-temp} for details.
We set
\[
\cS_\eq(\omega,\Phi)
=
\left\{
\nu\in\cSeq(\beta_\eq\Phi)\,\middle|\,
\nu(E(\Phi))=e_\ast
\right\}.
\]
Let $\omega_\eq\in\cS_\eq(\omega,\Phi)$. Since $\omega$ is not
$\tau_\Phi$-invariant, the Gibbs variational principle gives
$s(\omega)<s(\omega_\eq)$. Moreover, for every $\omega_+\in\Sigma_+$,
$$
s(\omega_+)\leq s(\omega_\eq),
$$
with equality if and only if $\omega_+\in\cS_\eq(\omega,\Phi)$. These 
are the kinematical entropic bounds associated with the equilibrium 
temperature $\beta_\eq$; see Proposition~\ref{prop:st-pres} below.

We next isolate a particularly simple mechanism leading to thermal
equilibrium. For a generic non-integrable system one expects that, in a
sufficiently local interaction class $\cB_0\subset\cBb$, the specific
energy is the only constant of the motion, \ie
$$
\{[\Psi]\mid\Psi\in\cB_0,E(\Psi)\in\fC(\Phi)\}=\operatorname{span}\{[\Phi]\}.
$$

\bet
Suppose that $\Phi\in\cBf$, that $(\omega,\Phi)$ has the property
of finite-range restoration of locality, and that
\[
\{[\Psi]\mid\Psi\in\cBf,E(\Psi)\in\fC(\Phi)\}
=\operatorname{span}\{[\Phi]\}.
\]
Then, one has
\[
\Sigma_+(\omega,\Phi)\subset\cS_\eq(\omega,\Phi).
\]
\eet
The result follows by combining finite-range restoration of locality with
the structural characterization of finite-range constants of motion; see
Theorem~\ref{thm:fr-ate} below. We emphasize that no regularity or weak-Gibbs
assumption is needed for this conclusion.

If additional constants of motion are present, thermalization requires the
initial and thermal states to agree on them. Suppose, for simplicity, that 
$\cS_\eq(\omega,\Phi)=\{\omega_\eq\}$. A necessary condition for a DSS to
coincide with $\omega_\eq$ is then
\beq
[\Psi]\in\cCphys(\Phi)\quad\Rightarrow\quad
\omega(E(\Psi))=\omega_\eq(E(\Psi)),
\label{intro-ATE-admissible}
\eeq
since every DSS has the same values as the initial state on all constants
of motion. We shall say that the pair $(\omega,\Phi)$ is
\emph{ATE-admissible} if~\eqref{intro-ATE-admissible} holds. The uniqueness
assumption on $\cS_\eq(\omega,\Phi)$ is made here only for simplicity; the
general criterion is given in Theorem~\ref{thm:main-ate} below.

\bet
Suppose that $\cS_\eq(\omega,\Phi)=\{\omega_\eq\}$, that
$\beta_\eq\Phi\in\cBR$, and that $(\omega,\Phi)$ is ATE-admissible.
Then, for any $\omega_+\in\Sigma_+(\omega,\Phi)$, the following statements
are equivalent:
\benr
\item $\omega_+=\omega_\eq$.
\item $\omega_+\in\cSphys(\Phi)$.
\eenr
\eet
Thus, in the presence of further conservation laws and without the
finite-range assumptions of the preceding result, the situation is more
subtle: a necessary admissibility condition involves the full family of physical
constants of motion. The dynamical Gibbs variational principle is designed
precisely to cover both situations.

The study of constants of motion in infinitely extended translation-invariant 
quantum spin systems is a separate research direction which, from the perspective
of our program, is initiated in the third paper of the series~\cite{Jaksic2026c}.
In this context, an important early contribution is Araki's analysis of constants 
of motion in the XY model~\cite{Araki1990}. We build on the recent works of Shiraishi, 
Chiba, Tasaki, and Sanatani~\cite{Shiraishi2019,Chiba2024,Chiba2025,Shiraishi2024b,
Yamaguchi2024a,Yamaguchi2024b,Shiraishi2025} who developed a detailed finite-volume 
theory of local conserved quantities for XYZ-type models.  
Building on these works, those of Hokkyo and Futami~\cite{HokkyoTest2025,FutamiHubbard2025},
and our dynamical characterization, the third paper develops a classification
theory of finite-range constants of motion in infinitely extended quantum
spin and fermion systems.

We emphasize that, in the case of thermal equilibrium, the structural theory
separates two distinct model-dependent problems. The first is to determine the 
sufficiently local constants of motion of a given dynamics. The second is 
asymptotic: determine whether the DSS admit equilibrium potentials in the
corresponding interaction class. 

We conclude this Introduction with several remarks.

The aim of our program is to separate the structural mechanisms of approach to
equilibrium -- conservation laws, locality, and variational principles -- from
the genuinely model-dependent features of the large-time dynamics. The program
is modeled on the kinematical theory of equilibrium quantum statistical
mechanics developed in the 1960s and 1970s by Araki, Ruelle, Lanford--Robinson,
Israel, and others. That theory isolated general structural principles -- the
Gibbs variational principle, the KMS characterization of equilibrium, physical
equivalence, etc. -- which provide the framework within which model-dependent
phenomena such as phase transitions can be studied. The theory developed here
may be viewed as a dynamical counterpart of this kinematical framework. This
analogy is also reflected in our terminology: dynamical steady states and the
dynamical Gibbs variational principle play roles parallel to equilibrium states
and the Gibbs variational principle. The dynamical ingredient which remains
outside the structural theory is restoration of locality. Once a large-time
state recovers a sufficiently local thermodynamic description, the structural
results developed here severely constrain, and in the regimes described above
determine, the possible dynamical steady states.

The four works in our series belong to a broader research program introduced
and discussed in detail in the forthcoming \textsl{Les Houches 2026} lecture
notes~\cite{Jaksic2026d}.\footnote{These lecture notes also contain extensions 
of the results of this work and its companions to translation-invariant 
fermionic systems.} This general program has three main directions. The first one, 
initiated by~\cite{Jaksic2023a} and the present series, concerns translation-invariant
lattice quantum spin systems. The second one deals with dynamical typicality and the 
interplay between large-time and thermodynamic limits. It is naturally  related with recent
rigorous works on approach to equilibrium based on the eigenstate thermalization 
hypothesis (ETH). The third direction is devoted to the study of approach to 
equilibrium in some classical systems, including anharmonic networks of classical 
oscillators and the classical Heisenberg model.

An announcement and pedagogical exposition of results of this paper and its
companions can be found in~\cite{Jaksic2026JMP}. 

The remainder of the paper is organized as follows. In Section~\ref{sec-QSS} we
review the kinematical thermodynamic formalism of quantum spin systems. 
Although much of this material is classical~\cite{Ruelle1978,Israel1979,Simon1993,
Bratteli1981,Van_Enter1993}, the length of the section is justified by the fact
that some of the notions introduced there appear to be new, and that some of the
results we need do not seem to be available in a convenient form in the literature.
Our main kinematical conjectures are also formulated in this section. 
Section~\ref{sec:ess} develops the theory of dynamical steady states, including
the equilibrium temperature and the basic entropic bounds. Section~\ref{sec:CM}
deals with constants of the motion and the two dynamical Gibbs variational principles. 
Section~\ref{sec:AE} formulates approach to equilibrium and restoration of locality, 
while Section~\ref{sec:ate-revisited} is devoted to approach to thermal equilibrium.

For convenience, several of the main results stated in the Introduction are
repeated below in the corresponding sections.

\paragraph*{Acknowledgments.} This work was partly funded by the CY
Initiative grant Investisse\-ments d’Avenir, grant number ANR-16-IDEX-0008. VJ
acknowledges the support of the MUR grant "Dipartimento di Eccellenza 2023-2027"
of Dipartimento di Matematica, Politecnico di Milano. We also acknowledge the
support of the ANR project DYNACQUS, grant number ANR-24-CE40-5714. 
AS is grateful to Clément Tauber for helpful discussions during the early stages of the project.

\section{Quantum Spin Systems}
\label{sec-QSS}

\subsection{Setting}
\label{sec-qss-setting}

\paragraph{$C^\ast$-algebra.} 

Let $\cH$ be the finite-dimensional Hilbert space of a single spin and consider
the lattice $\ZZ^d$ equipped with the norm $|x|=\sum_{j=1}^d|x_j|$. Let $\cF$ be
the collection of all finite subsets of $\ZZ^d$. We denote by $|X|$ the cardinality 
of $X\in\cF$ and, if $X\neq\emptyset$, by $\diam(X)\coloneq\max\{|x-y|\mid x,y\in X\}$ 
its diameter. For arbitrary $X\subset\ZZ^d$, we write $X^c\coloneq\ZZ^d\setminus X$.

A net $(\Lambda_\alpha)_{\alpha\in I}$ in $\cF$ is called a \emph{van~Hove net}
if it is exhaustive and, for every $r>0$,
\[
\lim_{\alpha\in I}\frac1{|\Lambda_\alpha|}
\left|\left\{x\in\Lambda_\alpha\mid
\dist(x,\Lambda_\alpha^c)\leq r\right\}\right|=0.
\]
We write $\Lim f(\Lambda)=L$ if $f(\Lambda_\alpha)\to L$ for every van~Hove net,
and $\Llim f(\Lambda)=L$ for the limit along the directed family of cubes
centered at the origin. The existence of a van~Hove limit implies the existence
of the centered-cube limit, with the same value, but not conversely. Unless
explicitly stated otherwise, thermodynamic limits in this paper are taken along
centered cubes. Van~Hove limits will be used below for surface-energy estimates
and in the \textbf{(SD)} property. The van~Hove formulation of the surface-energy 
condition is needed in Section~\ref{sec:CDC} for the application of the Araki--Moriya 
thermodynamic formalism~\cite{Araki2003}.

We shall also use inclusion-directed limits of a different nature. For fixed
$\Lambda\in\cF$, $\widetilde\Lambda$ may range over the finite supersets of
$\Lambda$, directed by inclusion. Such limits occur in the definition of surface
energies and are not thermodynamic limits.

Setting $\cH_x=\cH$ for all $x\in\ZZ^d$, we write $\cH_X\coloneq\bigotimes_{x\in X}\cH_x$,
$\fA_X\coloneq\cB(\cH_X)$ and let $\one_X$ denote the unit of $\fA_X$.\footnote{$\cB(\cH)$
denotes the $C^\ast$-algebra of all linear operators on the Hilbert space $\cH$.} 
If $Y\subset X$, we identify $\fA_Y$ with the subalgebra 
$\one_{X\setminus Y}\otimes\fA_Y\subset\fA_X$. For $X\subset\ZZ^d$, we denote by 
$\fA_X$ the inductive limit $C^\ast$-algebra over the family $(\fA_Y)_{\cF\ni Y\subset X}$.
The kinematical algebra of the spin system is $\fA_{\ZZ^d}$, and in the sequel
we will omit the subscript $\ZZ^d$. $\fAr$ will denote the real vector space of
all self-adjoint elements of $\fA$, and $\cS$ the set of states on $\fA$, the
positive and normalized elements of the dual $\fA^\ast$, equipped with the
\wstar{topology}. We shall further denote by $\cA$ the space of all
\wstar{continuous} affine maps $\ce:\cS\to\RR$, equipped with the sup-norm.

For any $X\subset\ZZ^d$ we have the identification $\fA=\fA_X\otimes\fA_{X^c}$.
An element $A\in\fA$ is called \textsl{local} if $A\in\fA_X$ for some $X\in\cF$.
The minimal $X$ for which this holds is called the \textsl{support} of $A$ and is
denoted by $\supp(A)$. The set $\fA_\loc$ of local elements is a dense
$\ast$-subalgebra of $\fA$.

For $X\in\cF$, let $\tr_{\cH_X}:\cB(\cH_X)\to\CC$ be the usual trace. If
$A\in\fA_X$, we further denote by
$$
\tr(A)\coloneq\frac{\tr_{\cH_X}(A)}{\tr_{\cH_X}(\one)}
$$
the normalized trace of $A$, which does not depend on the choice of 
$X\supset\supp(A)$. Since $|\tr(A)|\le\|A\|$, the normalized trace 
extends by continuity to a state $\tr\in\cS$, known as the unique 
tracial state on $\fA$, see~\cite[Section~6.2.5]{Bratteli1981}. 
Given an arbitrary subset $X\subset\ZZ^d$, we denote by 
$$
\begin{array}{rccc}
\tr_X:&\fA_X\otimes\fA_{X^c}&\to&\fA_{X^c}\\
&\sum_kA_k\otimes B_k&\mapsto&\sum_k\tr(A_k)B_k
\end{array}
$$ 
the partial normalized trace over $\fA_X$.

\paragraph{$C^\ast$-dynamical systems and KMS-states.} 

We denote by $\Aut(\fA)$ the group of $\ast$-automorphisms of $\fA$. A
$C^\ast$-\textsl{dynamics} on $\fA$ is a strongly continuous one-parameter
subgroup $\tau=(\tau^t)_{t\in\RR}\subset\Aut(\fA)$.\footnote{$t\mapsto\tau^t$ is
strongly continuous iff $\lim_{t\to0}\|\tau^t(A)-A\|=0$ for all $A\in\fA$.}
Given a state $\omega\in\cS$, the pair $(\tau,\omega)$ is a
$C^\ast$-\textsl{dynamical system} on $\fA$. The state $\omega$ is said to be
$\tau$-\textsl{invariant} whenever $\omega\circ\tau^t=\omega$ for all $t\in\RR$.
Given $\beta\in\RR^\ast$,\footnote{We use the familiar notation
$\AA^\ast=\AA\setminus\{0\}$ for a ring $\AA$.} set
\[
S_\beta\coloneq\{z\in\CC\mid 0<\sgn(\beta)\Im z<|\beta|\}.
\]
A state $\omega$ is called a $\beta$-\textsl{KMS} state for $\tau$ if, for
any $A,B\in\fA$, there exists a function $F_{A,B}$, analytic on $S_\beta$,
bounded and continuous on its closure, such that
\[
F_{A,B}(t)=\omega(A\tau^t(B)),\qquad
F_{A,B}(t+\i\beta)=\omega(\tau^t(B)A),
\]
for all $t\in\RR$. Any $\beta$-KMS state for $\tau$ is $\tau$-invariant.
For later reference, we note the following properties of the spin algebra $\fA$;
see~\cite[Lemma~III.3.3 and Theorem~III.3.4]{Israel1979}.
\bet\label{thm:israel}
\benr
\item $\fA$ is simple, i.e., has no non-trivial two-sided ideals.
\item Any KMS state on $\fA$ is faithful. 
\item If two $C^\ast$-dynamics $\tau_1$ and $\tau_2$ on $\fA$
share a $\beta$-KMS state, then $\tau_1=\tau_2$. 
\eenr
\eet

\paragraph{Translation invariant states.} 

The spin algebra $\fA$ is uniformly asymptotically Abelian w.r.t.\;the 
natural action $\varphi:\ZZ^d\ni x\mapsto\varphi^x\in\mathrm{Aut}(\fA)$
of the translation group of the lattice $\ZZ^d$, \textsl{i.e.,}
\[
\lim_{|x|\to\infty}\|[\varphi^x(A),B]\|=0
\]
for all $A,B\in\fA$. We denote by $\fA_I^\ast$ the space of translation invariant
elements of the dual $\fA^\ast$. The set $\cSI\coloneq\fA_I^\ast\cap\cS$ of all 
translation-invariant states on $\fA$ is a compact convex subset of
$\cS$ and a Choquet simplex.

\paragraph{Specific entropies.} 

The restriction $\nu_\Lambda$ of a state $\nu\in\cSI$ to $\fA_\Lambda$ is
represented by a density matrix in $\fA_\Lambda$ which we denote by the same
letter: For any $A\in\fA_\Lambda$,
$$
\nu(A)=\nu_\Lambda(A)=\tr_{\cH_\Lambda}(\nu_\Lambda A).
$$
The von~Neumann entropy of $\nu_\Lambda$ is defined by
$S(\nu_\Lambda)\coloneq-\tr_{\cH_\Lambda}(\nu_\Lambda\log\nu_\Lambda)$ and is
non-negative. The \textsl{specific entropy} of $\nu$ is given by the limit
\beq
s(\nu)\coloneq\Llim\frac{S(\nu_\Lambda)}{|\Lambda|},
\eeq
which exists and defines an affine, \wstar{upper-semicontinuous} map
$\cSI\ni\nu\mapsto s(\nu)\in[0,\log\dim\cH]$, \cite[Proposition~6.2.38]{Bratteli1981}.
 
The relative entropy of $\nu_\Lambda$ w.r.t.\;the restriction $\mu_\Lambda$ 
of another state $\mu\in\cSI$ is defined by
$$
S(\nu_\Lambda|\mu_\Lambda)\coloneq\tr_{\cH_\Lambda}(\nu_\Lambda(\log\nu_\Lambda-\log\mu_\Lambda)).
$$
It satisfies $S(\nu_\Lambda|\mu_\Lambda)\ge0$, with equality iff 
$\nu_\Lambda=\mu_\Lambda$. The \textsl{specific relative entropy} of $\nu$ 
w.r.t.\;$\mu$ is given by the limit
\[
s(\nu|\mu)\coloneq\Llim\frac{S(\nu_\Lambda|\mu_\Lambda)}{|\Lambda|},
\]
whenever it exists, in which case we have $s(\nu|\mu)\geq0$.

\paragraph{Stein's Lemma.} 

Let $\nu,\mu\in\cSI$ be faithful states and $0<\epsilon <1$. The respective
\textsl{Stein error exponents} are defined by
\begin{align*}
\bar B_\epsilon&\coloneq\inf_{(T_\Lambda)}\,\limsup_{\{\Lambda\}}\frac1{|\Lambda|}\log\mu_\Lambda(\one-T_\Lambda),\\[4pt]
\ubar B_\epsilon&\coloneq\inf_{(T_\Lambda)}\,\liminf_{\{\Lambda\}}\frac1{|\Lambda|}\log\mu_\Lambda(\one-T_\Lambda),\\[4pt]
B_\epsilon&\coloneq\inf_{(T_\Lambda)}\,\lim_{\{\Lambda\}}\frac1{|\Lambda|}\log\mu_\Lambda(\one-T_\Lambda),
\end{align*}
where the infimums are taken over all families $(T_\Lambda)$ of orthogonal projections 
$T_\Lambda\in\fA_\Lambda$ such that $\nu_\Lambda(T_\Lambda)\leq\epsilon$. 
In the case of $B_\epsilon$ the families $(T_\Lambda)$ are also such that 
\[
\Llim\frac1{|\Lambda|}\log\mu_\Lambda(\one-T_\Lambda)
\]
exists. 

The Stein error exponents measure the distinguishability of the pair $(\nu,\mu)$
in the sense of asymptotic hypothesis testing of the family
$(\nu_\Lambda,\mu_\Lambda)$, or in the sense of how the supports of
$\nu_\Lambda$ and $\mu_\Lambda$ separate as $\nu_\Lambda \rightarrow \nu$ and
$\mu_\Lambda \rightarrow \mu$.

We say that the \textsl{quantum Stein Lemma} holds for the pair $(\nu,\mu)$ iff 
$s(\nu|\mu)$ exists and 
\[
\bar B_\epsilon=\ubar B_\epsilon=B_\epsilon=-s(\nu|\mu)
\]
holds for all $0<\epsilon<1$. The following is a special case of~\cite[Theorem~3.6]{Jaksic2012}.
\bet\label{thm:stein}
Suppose that for some $\delta >0$ the expression
\beq
[0,1+\delta]\ni \alpha\mapsto
e(\alpha)\coloneq\Llim\tfrac1{|\Lambda|}\log\tr_{\cH_\Lambda}(\nu_\Lambda^\alpha\mu_\Lambda^{1-\alpha})
\label{eq:lim}
\eeq
defines a continuous function $e$, differentiable on $(0,1]$ and such that $(\partial^+e)(0)<e^\prime(1)$.
\footnote{We denote by $\partial^\mp$ the left/right derivative.}
Then the quantum Stein Lemma holds for the pair $(\nu,\mu)$. 
\eet
Although the assumptions of this theorem are not optimal
(see~\cite[Section~6.5]{Jaksic2012}), they are technically and conceptually natural. Note
that $e^\prime(1)=s(\nu|\mu)$. We will discuss in Section~\ref{sec:finite-range}
specific settings in which the conditions of Theorem~\ref{thm:stein} are
satisfied.

\paragraph{Factoring out translation invariance.} 

Some structural aspects of $\fA$ related to translation invariance will play an
important role in our work. In the classical setting, the corresponding ideas
can be traced back to Ruelle~\cite{Ruelle1978}; we follow~\cite[Section~2.4.3]{Van_Enter1993}. 
We are not aware of a convenient reference for the corresponding quotient-space
results in quantum spin systems, so we include the short proofs. These results 
will be used to identify constants of the motion with classes of interactions.

Let $\fI\subset\fAr$ be the closed linear span of the set
$$
\fI_0\coloneq\left\{A-\varphi^a(A)\,\left|\,A\in\fAr,a\in\ZZ^d\right.\right\},
$$
and denote the Cesàro mean on $\fA$ by
$$
m_\Lambda(A)\coloneq \frac{1}{|\Lambda|}\sum_{x\in\Lambda}\varphi^x(A).
$$
\begin{proposition}\label{prop:vanish_on_average}
For $A\in\fAr$, the following conditions are equivalent. 
\benr
\item $A\in\fI$.
\item $\ds\Llim\|m_\Lambda(A)\|=0$.
\item $\varphi(A)=0$ for all $\varphi\in\fA^\ast_I$.
\item $\omega(A)=0$ for all $\omega\in\cSI$.
\eenr
\end{proposition}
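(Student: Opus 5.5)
I will prove the cycle (i)$\Rightarrow$(ii)$\Rightarrow$(iii)$\Rightarrow$(iv)$\Rightarrow$(i). The implication (iii)$\Rightarrow$(iv) is trivial, since $\cSI\subset\fA_I^\ast$. The other three steps use a Cesàro averaging estimate, duality of $\fA_I^\ast$ with the Banach space $\fA/\fI$, and a Hahn--Banach/Jordan decomposition argument. The last step is the delicate one.

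\emph{(i)$\Rightarrow$(ii).} The map $A\mapsto\|m_\Lambda(A)\|$ is a seminorm bounded by $\|A\|$, uniformly in $\Lambda$. The set of $A$ with $\Llim\|m_\Lambda(A)\|=0$ is therefore a closed linear subspace. It thus suffices to check the generators $A-\varphi^a(A)\in\fI_0$. For these,
\[
m_\Lambda(A-\varphi^a(A))=\frac1{|\Lambda|}\Bigl(\sum_{x\in\Lambda}\varphi^x(A)-\sum_{x\in\Lambda+a}\varphi^x(A)\Bigr),
\]
so its norm is at most $|\Lambda\,\triangle\,(\Lambda+a)|\,\|A\|/|\Lambda|$. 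This tends to $0$ along centered cubes, since $|\Lambda\triangle(\Lambda+a)|=O(|a|\,|\Lambda|^{(d-1)/d})$.

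\emph{(ii)$\Rightarrow$(iii).} For $\varphi\in\fA_I^\ast$, translation invariance gives $\varphi(A)=\varphi(m_\Lambda(A))$ for every $\Lambda$. Hence $|\varphi(A)|\le\|\varphi\|\,\|m_\Lambda(A)\|\to0$.

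\emph{(iv)$\Rightarrow$(i).} This is the main step. I will argue by contraposition: suppose $A\in\fAr\setminus\fI$. Since $\fI$ is a closed real subspace of the real Banach space $\fAr$, Hahn--Banach yields a bounded real-linear functional $f$ on $\fAr$ with $f|_\fI=0$ and $f(A)\ne0$. Vanishing on $\fI_0$ means $f\circ\varphi^a=f$ for all $a$. Extend $f$ complex-linearly to a hermitian functional on $\fA$; it remains translation invariant. By the Jordan decomposition, $f=f_+-f_-$ with $f_\pm$ positive and $\|f\|=\|f_+\|+\|f_-\|$. The obstacle is that $f_\pm$ need not be translation invariant.

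To handle this, I will use uniqueness of the Jordan decomposition. The functionals $f_\pm\circ\varphi^a$ form another decomposition of $f\circ\varphi^a=f$ with the same total norm, so $f_\pm\circ\varphi^a=f_\pm$. Alternatively, one can average $f_\pm$ along Cesàro means over cubes and take \wstar{limit} points; the limits $g_\pm$ are positive and translation invariant, and since $f$ is already invariant, $f=g_+-g_-$. Either way $f=c_+\omega_+-c_-\omega_-$ with $\omega_\pm\in\cSI$ and $c_\pm\ge0$. Then (iv) forces $f(A)=0$, a contradiction.

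The averaging route avoids invoking uniqueness and is the one I would write out.
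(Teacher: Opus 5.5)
Your proof is correct and follows essentially the paper's route. Your (i)$\Rightarrow$(ii) and (ii)$\Rightarrow$(iii) use the same Cesàro estimates as the paper, and your Hahn--Banach separation from $\fI$ is exactly the bipolar argument ${}^\perp(\fI_0^\perp)=\fI$ the paper uses for (iii)$\Rightarrow$(i). The only difference is organizational: you go directly from (iv) to (i), and in doing so you prove, via Jordan decomposition plus Cesàro averaging (or uniqueness of the Jordan decomposition), that every translation-invariant hermitian functional is a difference of nonnegative multiples of states in $\cSI$. That is the fact ``$\fA_I^\ast$ is the linear span of $\cSI$'' which the paper invokes without proof for (iii)$\Leftrightarrow$(iv).
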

\proof We shall use the following notation and properties, $C$ denoting a subset of a Banach space $B$,
and $D$ a subset of its dual $B^\ast$. The annihilators of $C$ and $D$ are, respectively,
\begin{align*}
C^\perp&\coloneq\{f\in B^\ast\mid f(x)=0\text{ for all }x\in C\},\\[4pt]
\leftindex^\perp D&\coloneq\{x\in B\mid f(x)=0\text{ for all }f\in D\}.
\end{align*}
We note that $C^\perp$ is a \wstar{closed} subspace of $B^\ast$ and $\leftindex^\perp D$
a norm-closed subspace of $B$. We recall also that $\leftindex^\perp(C^\perp)$ is the
norm closure of the linear span of $C$ while $(\leftindex^\perp D)^\perp$ is the
\wstar{closure} of the linear span of $D$; see~\cite[Theorem~4.7]{Rudin1973}.

\medskip\noindent\textbf{(i)} $\Rightarrow $\textbf{(ii)} For any $A\in\fA$ 
and $a\in\ZZ^d$ one has
$$
\left\|m_\Lambda\left(A-\varphi^a(A)\right)\right\|
\le\frac1{|\Lambda|}\sum_{x\in\Lambda\Delta(\Lambda+a)}\|\varphi^x(A)\|
\le\frac{|\Lambda\Delta(\Lambda+a)|}{|\Lambda|}\|A\|=O(|\Lambda|^{-1/d}),
$$
and since $m_\Lambda$ is a linear map such that $\|m_\Lambda(A)\|\le\|A\|$, 
the result follows from an $\epsilon/3$ argument.

\medskip\noindent\textbf{(ii)} $\Rightarrow $\textbf{(iii)} Immediately follows 
from the fact that $\varphi(A)=\varphi(m_\Lambda(A))$. 

\medskip\noindent\textbf{(iii)} $\Rightarrow $\textbf{(i)} 
By assumption $A\in\leftindex^\perp\fA_I^\ast$. Observing that $\fI_0^\perp\subset\fA_I^\ast$, 
we conclude $\fI=\leftindex^\perp(\fI_0^\perp)\supset\leftindex^\perp\fA_I^\ast\ni A$.

\medskip\noindent\textbf{(iii)} $\Leftrightarrow $ \textbf{(iv)} Since
$\fA^\ast_I$ is the linear span of $\cSI$,
one has $\leftindex^\perp\cS_I=\leftindex^\perp\fA_I^\ast$.\qed

Identifying $\RR$ with $\RR\one\subset\fAr$, we observe that $\fI$, and $\fI+\RR$ 
are closed subspaces of $\fAr$. Hence, the respective quotient 
spaces are Banach spaces for the associated norms:
\begin{gather*}
\|A\|_{\fAr/\fI}\coloneq\inf_{B\in\fI}\|A-B\|,\\[4pt]
\|A\|_{\fAr/(\fI+\RR)}\coloneq\inf_{B\in\fI,b\in\RR} \|A- B-b\one\|.
\end{gather*}
These quotient norms can be expressed in terms of Cesàro means.
\begin{proposition}\label{prop:seminorm_interpretation}
For any $A\in\fAr$, one has
\begin{gather}
\Llim\left\|m_\Lambda(A)\right\|=\|A\|_{\fAr/\fI},\label{eq:overI}\\
\Llim\left\|m_\Lambda(A)\right\|_{\fAr/\RR}=\|A\|_{\fAr/(\fI+\RR)}\label{eq:overIplusR}.
\end{gather}
\end{proposition}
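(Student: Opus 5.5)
We prove \eqref{eq:overI} by a two-sided estimate, and then obtain \eqref{eq:overIplusR} by the same argument carried out modulo constants.

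\textbf{Existence of the limit.} Set $a_\Lambda=\|m_\Lambda(A)\|$. First we show that the limit along centered cubes exists. The natural tool is subadditivity: a large cube $\Lambda_L$ is tiled, up to a boundary layer, by translates $\Lambda_\ell+y$ of a smaller cube. Since $m_{\Lambda_\ell+y}(A)=\varphi^y(m_{\Lambda_\ell}(A))$ and $\varphi^y$ is isometric, we get
\[
a_{\Lambda_L}\le a_{\Lambda_\ell}+O(\ell/L)\,\|A\|.
\]
This gives $\limsup_L a_{\Lambda_L}\le \inf_\ell a_{\Lambda_\ell}$, so the limit exists and equals $\inf_\ell a_{\Lambda_\ell}$.

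\textbf{Upper bound.} For any $B\in\fI$, Proposition~\ref{prop:vanish_on_average}(ii) gives $\|m_\Lambda(B)\|\to0$. Using $\|m_\Lambda\|\le1$ we obtain
\[
\Llim\|m_\Lambda(A)\|\le\|A-B\|+\Llim\|m_\Lambda(B)\|=\|A-B\|,
\]
and taking the infimum over $B\in\fI$ gives $\Llim\|m_\Lambda(A)\|\le\|A\|_{\fAr/\fI}$.

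\textbf{Lower bound.} The key observation is that $A-m_\Lambda(A)\in\fI$, since
\[
A-m_\Lambda(A)=\frac1{|\Lambda|}\sum_{x\in\Lambda}\bigl(A-\varphi^x(A)\bigr)
\]
is a finite combination of elements of $\fI_0$. Therefore $\|A\|_{\fAr/\fI}\le\|A-(A-m_\Lambda(A))\|=\|m_\Lambda(A)\|$ for every $\Lambda$. Hence $\|A\|_{\fAr/\fI}\le\inf_\Lambda a_\Lambda\le\liminf a_\Lambda$. Combined with the upper bound, this also proves existence of the limit directly, so the subadditivity step is not even needed.

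\textbf{Proof of \eqref{eq:overIplusR}.} The same scheme works in $\fAr/\RR$. For the upper bound, $\varphi^x(b\one)=b\one$, so $\|m_\Lambda(A)\|_{\fAr/\RR}\le\|m_\Lambda(A-B-b\one)\|_{\fAr/\RR}+\|m_\Lambda(B)\|$. The first term is at most $\|A-B-b\one\|$ and the second tends to zero. For the lower bound, $A-m_\Lambda(A)\in\fI$ gives
\[
\|A\|_{\fAr/(\fI+\RR)}\le\inf_b\|m_\Lambda(A)-b\one\|=\|m_\Lambda(A)\|_{\fAr/\RR}.
\]

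There is no real obstacle here. The only point requiring care is recognizing that $A-m_\Lambda(A)\in\fI$ exactly, which yields the lower bound uniformly in $\Lambda$ and with it the existence of the limit.
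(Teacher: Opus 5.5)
Your proof is correct and is essentially the paper's argument. The lower bound comes from $A-m_\Lambda(A)\in\fI$, which is the paper's decomposition $m_\Lambda(A)=A+A_\Lambda$. The upper bound comes from $\|m_\Lambda(A)\|\le\|A-B\|+\|m_\Lambda(B)\|$ with $\|m_\Lambda(B)\|\to0$ for $B\in\fI$, and the same two steps are repeated modulo constants for the second identity. The subadditivity step is indeed superfluous; if you drop it, state the upper bound with $\limsup_{\{\Lambda\}}$ rather than $\Llim$, so that existence of the limit follows from the two bounds.
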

\proof
Let $A\in\fAr$. On the one hand,
\beq
m_\Lambda(A)=A+\frac1{|\Lambda|}\sum_{x\in\Lambda}\left(\varphi^x(A)-A\right)=A+A_\Lambda,
\label{eq:mLambdaDecomp}
\eeq
where $A_\Lambda\in\fI$. It follows that $\|m_\Lambda(A)\|=\|A+A_\Lambda\|\ge\|A\|_{\fAr/\fI}$, 
and hence
\beq
\liminf_{\{\Lambda\}}\|m_\Lambda(A)\|\ge\|A\|_{\fAr/\fI}.
\label{eq:overIlow}
\eeq
On the other hand, for any $\epsilon>0$ there exists $B_\epsilon\in\fI$ such that 
$\|A-B_\epsilon\|<\|A\|_{\fAr/\fI}+\epsilon$. We deduce that
$$
\|m_\Lambda(A)\|\le\|m_\Lambda(A-B_\epsilon)\|+\|m_\Lambda(B_\epsilon)\|
\le\|A-B_\epsilon\|+\|m_\Lambda(B_\epsilon)\|
\le\|A\|_{\fAr/\fI}+\epsilon+\|m_\Lambda(B_\epsilon)\|,
$$
and hence, by Proposition~\ref{prop:vanish_on_average},
$$
\limsup_{\{\Lambda\}}\|m_\Lambda(A)\|\le\|A\|_{\fAr/\fI}+\epsilon.
$$
Letting $\epsilon\downarrow0$, the last estimate combined with~\eqref{eq:overIlow}
gives~\eqref{eq:overI}.

To prove~\eqref{eq:overIplusR} we derive from~\eqref{eq:mLambdaDecomp}
$$
\|m_\Lambda(A)\|_{\fAr/\RR}=\inf_{b\in\RR}\|A+A_\Lambda-b\one\|
\ge\inf_{B\in\fI,b\in\RR}\|A-B-b\one\|=\|A\|_{\fAr/(\fI+\RR)},
$$
and hence
\beq
\liminf_{\{\Lambda\}}\|m_\Lambda(A)\|_{\fAr/\RR}\ge\|A\|_{\fAr/(\fI+\RR)}.
\label{eq:overIplusRlow}
\eeq
To obtain the opposite estimate, we note that for  any $\epsilon>0$ 
there exists $B_\epsilon\in\fI$ and $b_\epsilon\in\RR$ such that 
$\|A-B_\epsilon-b_\epsilon\one\|<\|A\|_{\fAr/(\fI+\RR)}+\epsilon$. It follows that
\begin{align*}
	\|m_\Lambda(A)\|_{\fAr/\RR}
	&=\inf_{b\in\RR}\|m_\Lambda(A-B_\epsilon-b_\epsilon\one)+m_\Lambda(B_\epsilon)+(b_\epsilon-b)\one\|\\[4pt]
	&\le\inf_{b\in\RR}\left(\|A-B_\epsilon-b_\epsilon\one\|+\|m_\Lambda(B_\epsilon)\|+|b_\epsilon-b|\right)\\[4pt]
	&\le\|A\|_{\fAr/(\fI+\RR)}+\epsilon+\|m_\Lambda(B_\epsilon)\|,
\end{align*}
and hence, invoking again Proposition~\ref{prop:vanish_on_average},
$$
\limsup_{\{\Lambda\}}\|m_\Lambda(A)\|_{\fAr/\RR}\le\|A\|_{\fAr/(\fI+\RR)}+\epsilon.
$$
Combining this estimate with~\eqref{eq:overIplusRlow} and letting 
$\epsilon\downarrow0$ yields~\eqref{eq:overIplusR}.
\qed

\subsection{Thermodynamic formalism}
\label{sec:TF}

An \textsl{interaction} is a family $\Phi=(\Phi(X))_{X\in\cF}$, where 
each $\Phi(X)$ is a self-adjoint element of $\fA_X$, with the
convention $\Phi(\emptyset)=0$. In the following, without further
mention, all interactions are assumed to be translation invariant,
{\sl i.e.,} such that 
\[
\varphi^x(\Phi(X))=\Phi(X+x)
\]
holds for all $X\in\cF$ and $x\in\ZZ^d$. The set of all interactions has a
natural real vector space structure. For each $X\in\cF$, the \textsl{local
Hamiltonian} associated to the interaction $\Phi$ is the self-adjoint element of
$\fA_X$ defined by
$$
H_X(\Phi)\coloneq\sum_{Y\subset X}\Phi(Y).
$$ 

Alternatively, one can start with a family $H=(H_X)_{X\in\cF}$, 
where each local Hamiltonian $H_X$ is a self-adjoint element of $\fA_X$
such that $H_\emptyset=0$, and $\varphi^x(H_X)=H_{X+x}$ holds for all 
$X\in\cF$ and $x\in\ZZ^d$. Such a family of Hamiltonians $H$ is 
associated to a unique interaction $\Phi=(\Phi(X))_{X\in\cF}$ given
by the inclusion-exclusion formula 
\beq
\Phi(X)\coloneq\sum_{Y\subset X}(-1)^{|X|-|Y|}H_Y.
\label{eq:incexc}
\eeq
This interaction--Hamiltonian duality\footnote{See also the remark after
Proposition~6.2.3 in~\cite{Bratteli1981}.} will play a role in
Section~\ref{sec:CDC}, where we will study constants of the motion. Since, as we
shall describe below, physically reasonable conditions lead to mathematically
natural Banach spaces of interactions, the traditional approach to quantum spin
systems takes interactions as a starting point. We note however that the 
Araki--Moriya thermodynamic formalism~\cite{Araki2003}
treats interactions and Hamiltonians on equal footing. For reasons of space we
will not review this approach here, although we will make use of its results.

Various spaces of interactions appear in the  study of quantum spin systems. 
The most basic one is the space $\cBf$ of finite range interactions. 
We recall that an interaction $\Phi$ is called \textsl{finite range}
whenever $\sup\{\diam(X)\mid\Phi(X)\neq0\}<\infty$. 
 
The other common spaces arise by fixing a translation invariant weight
$h:\cF\to[1,\infty)$, setting 
\beq
\|\Phi\|_h\coloneq\sum_{0\in X}\frac{h(X)}{|X|}\|\Phi(X)\|,
\label{eq:h-norm}
\eeq
and $\cB_h\coloneq\{\Phi\mid\|\Phi\|_h<\infty\}$. The pair $(\cB_h,\|\argdot\|_h)$ 
is a Banach space and $\cBf$ is a dense subspace of $\cB_h$.

\subsubsection{$\cBb$ -- the big space of interactions} 

This space corresponds to the constant weight $h(X)=1$ for all $X$. 
We denote the induced norm by $\|\argdot\|_\mathrm{b}$ and observe that
$\cB_h\subset\cBb$ with $\|\Phi\|_h\geq\|\Phi\|_\mathrm{b}$ for any
weight $h$.

$\cBb$ provides a natural background for the thermodynamic formalism of lattice
quantum spin systems. To any interaction $\Phi\in\cBb$ one can associate 
the ``\textsl{specific energy}'' observable
$$
E(\Phi)\coloneq\sum_{X\ni0}\frac{\Phi(X)}{|X|}\in\fAr,
$$
and the induced affine map $\ce_\Phi\in\cA$ given by
\beq 
\ce_\Phi(\omega)\coloneq\omega(E(\Phi)).
\label{def:tpe}
\eeq
Note that, for any $\Phi\in\cBb$,
$$
\|\mathcal{e}_\Phi\|=\sup_{\omega\in\cS}|\ce_\Phi(\omega)|
=\|E(\Phi)\|\le\|\Phi\|_\mathrm{b},
$$
and that for any $\Lambda\in\cF$,
$$
\left\|\frac1{|\Lambda|}H_\Lambda(\Phi)\right\|
\le\frac1{|\Lambda|}\sum_{X\subset\Lambda}\|\Phi(X)\|
\le\frac1{|\Lambda|}\sum_{x\in\Lambda}\sum_{X\ni x}\frac{\|\Phi(X)\|}{|X|}
\le\|\Phi\|_\mathrm{b}.
$$

\paragraph{The Ruelle map $R$.} 

Define the \textsl{center} $c(X)$ of $X\in\cF$ as its $\lfloor(|X|+1)/2\rfloor$-th
element\footnote{$\lfloor\argdot\rfloor$ stands for the integer part.} in lexicographic
order, and set
\beq
R(\Phi)\coloneq\sum_{c(X)=0}\Phi(X),
\label{eq:RDef}
\eeq
for $\Phi\in\cBb$. To the best of our knowledge, the following observations go back to 
Ruelle~\cite[Section~3.2]{Ruelle1978}. 
\bep\label{prop:ru-R}
\benr
\item $R:\cBb\to\fAr$ is a surjective linear contraction such that $R(\cBf)=\fAr\cap\fAloc$.
\item For all $\Phi\in\cBb$, $R(\Phi)\equiv E(\Phi)\,(\mathrm{mod}\,\fI)$.
\item The map $\hat R:\cBb/\Ker(R)\ni\Phi+\Ker(R)\mapsto R(\Phi)\in\fAr$ is an isometric isomorphism.
\eenr
\eep
\proof
\textbf{(i)} Setting
$$
T_{X,Y}=\begin{cases*}
	1&\text{if $Y$ is a translate of $X$; }\\[4pt]
	0&\text{otherwise},
\end{cases*}
$$
for $\Phi\in\cBb$, one has
\beq
\begin{split}
	\|R(\Phi)\|\le\sum_{c(X)=0}\|\Phi(X)\|
	&=\sum_{c(X)=0}\frac{1}{|X|}\sum_{x\in X}\|\Phi(X-x)\|
	=\sum_{c(X)=0}\sum_{Y\ni0}\frac{\|\Phi(Y)\|}{|Y|}T_{X,Y}\\[4pt]
	&=\sum_{Y\ni0}\frac{\|\Phi(Y)\|}{|Y|}\sum_{c(X)=0}T_{X,Y}
	=\|\Phi\|_\mathrm{b},
\end{split}
\label{eq:Rtrick}
\eeq
where, in the last equality, we used the fact that $Y$ has a unique translate centered 
at $0$. This shows that $R$ is a linear contraction from $\cBb$ to $\fAr$. Clearly 
$R(\cBf)\subset\fAr\cap\fAloc$. Moreover, let $A\in\fAr\cap\fAloc$ and choose a finite set
$Y\supset\supp(A)$ centered at $0$. Define the translation-invariant finite-range
interaction $\Phi_A$ by
\[
\Phi_A(Y+a)=\varphi^a(A),\qquad a\in\ZZ^d,
\]
and $\Phi_A(X)=0$ for all other $X\in\cF$. Since $Y$ has a unique translate
centered at $0$, $R(\Phi_A)=A$. Moreover, exactly $|Y|$ translates of $Y$ contain
the origin, and hence $\|\Phi_A\|_\mathrm{b}=\|A\|$. Thus,
$R(\cBf)=\fAr\cap\fAloc$ and for any $A\in\fAr\cap\fAloc$, one has 
\beq
\|A\|=\inf\{\|\Phi\|_\mathrm{b}\mid \Phi\in\cBb,R(\Phi)=A\}.
\label{eq:Rbound}
\eeq
Now, for any $A\in\fAr$, there is a sequence $(A_n)_{n\in\NN}$ in $\fAr\cap\fAloc$ 
such that $\sum_n\|A_n\|<\infty$, and $A=\sum_nA_n$. By the above construction,
there is a sequence $(\Phi_n)_{n\in\NN}$ in $\cBf$ such that 
$R(\Phi_n)=A_n$ and $\|\Phi_n\|_\mathrm{b}=\|A_n\|$ for all $n\in\NN$. It follows that 
$\Phi=\sum_n\Phi_n\in\cBb$ and $R(\Phi)=A$. 

\medskip\noindent\textbf{(ii)} Repeating the argument in~\eqref{eq:Rtrick}, we derive 
$\omega(R(\Phi))=\omega(E(\Phi))$ for all $\omega\in\cSI$, and the claim
follows from Proposition~\ref{prop:vanish_on_average}.

\medskip\noindent\textbf{(iii)} With $\cK=\Ker(R)$, the quotient $\cQ=\cBb/\cK$ equipped
with the quotient norm $\|\argdot\|_\cQ$ is a Banach space. We denote by $\pi$ the
canonical surjection $\cBb\to\cQ$. Since $\fAr=\Ran(R)$ is closed, the map
$\hat R:\pi(\Phi)\mapsto R(\Phi)$ is an isomorphism, {\sl i.e.,}
a bounded linear bijection from $\cQ$ onto $\fAr$, with a bounded inverse. 
For $\Phi\in\cBf$, \eqref{eq:Rbound} can be written 
as
$$
\|\hat R\circ\pi(\Phi)\|=\inf\{\|\Phi+\Psi\|_\mathrm{b}\mid\Psi\in\cK\}=\|\pi(\Phi)\|_\cQ,
$$
and since $\pi(\cB_f)$ is dense in $\cQ$, this identity extends by continuity to $\Phi\in\cBb$.\qed

\paragraph{Factoring out translation invariance in $\cBb$.}

For any closed subspace $\fB\subset\fAr$, the Ruelle map $R$ induces an
isometric isomorphism
\[
\cBb/R^{-1}(\fB)\rightarrow\fAr/\fB.
\]
Indeed, for any $\Phi\in\cBb$, Proposition~\ref{prop:ru-R}(iii) gives
$$
\|R(\Phi)\|_{\fAr/\fB}=\inf_{B\in\fB}\|R(\Phi)+B\|
=\inf_{\Psi\in R^{-1}(\fB)}\|R(\Phi+\Psi)\|
=\inf_{\Psi\in R^{-1}(\fB)}\inf_{\Upsilon\in\Ker(R)}\|\Phi+\Psi+\Upsilon\|_\mathrm{b},
$$
and since $\Ker(R)\subset R^{-1}(\fB)$ we further have\footnote{To simplify notation, and
whenever the meaning is clear from the context, we shall
denote by $\Phi$ the equivalence class of $\Phi\in\cBb$ in a quotient space $\cBb/\cB$.\label{foot:Qnot}}
$$
\|R(\Phi)\|_{\fAr/\fB}
=\inf_{\Psi\in R^{-1}(\fB)}\|\Phi+\Psi\|_\mathrm{b}
=\|\Phi\|_{\cBb/R^{-1}(\fB)}.
$$
Of particular interest are the two cases $\fB=\fI$ and $\fB=\fI+\RR$, and
we denote the reciprocal images by $\cI=R^{-1}(\fI)$ and $\cR=R^{-1}(\RR)$.
                                               
\bet\label{thm:RuIso}
The Ruelle map $R$ induces the isometric isomorphisms 
\beq
\cBb/\cI\to\fAr/\fI,
\label{eq:hatR}
\eeq
and
$$
\cBb/(\cI+\cR)\to\fAr/(\fI+\RR).
$$
\eet

\begin{remark}
Instead of defining Ruelle's map by Equ.~\eqref{eq:RDef}, we could have chosen
$\Phi\mapsto E(\Phi)$. Indeed, it follows from Proposition~\ref{prop:ru-R}(ii)
that the subspaces $\cI$ and $\cI+\cR$, as well as the isometries of Theorem~\ref{thm:RuIso},
are unaffected by this change. Other choices are possible, which are all
related to Ruelle's notion of physical equivalence that we discuss below.
\end{remark}

\paragraph{Specific energy, pressure, and variational principles}

The following summarizes the basic elements of the thermodynamic formalism.
The reader should consult~\cite{Ruelle1978,Israel1979} for a detailed exposition.

\bet\label{thm:Gibbs}
Assume that $\Phi\in\cBb$.
\benr
\item {\bf Specific energy.}
For any $\omega\in\cSI$, one has 
\[
\Llim\frac1{|\Lambda|}\omega(H_\Lambda(\Phi))=\ce_\Phi(\omega).
\]
\item {\bf Pressure.}
The limit 
\beq
P(\Phi)\coloneq\Llim\frac1{|\Lambda|}\log\tr_{\cH_\Lambda}\left(\e^{-H_\Lambda(\Phi)}\right)
\label{eq:PDef}
\eeq
exists and is finite. The map $\cBb\ni\Phi \mapsto P(\Phi)$ is convex,
continuous, and satisfies
\[
|P(\Phi)-P(\Psi)|\leq\|\Phi-\Psi\|_\mathrm{b}.
\]
\item {\bf Variational principle.} The pressure $P$ satisfies
\beq
P(\Phi)=\sup_{\omega\in\cSI}\left(s(\omega)-\ce_\Phi(\omega)\right).
\label{eq:GibbsPrince}
\eeq
The set of maximizers
\[
\cSeq(\Phi)\coloneq\left\{\omega\in\cSI\mid P(\Phi)=s(\omega)-\ce_\Phi(\omega)\right\}
\]
is a non-empty, convex, compact subset of $\cSI$. It is a Choquet simplex
and a face of $\cSI$. Its elements are called \textbf{equilibrium states}
for the interaction $\Phi$ and its extreme points are $\varphi$-ergodic.

\item {\bf Dual variational principle.} For any $\omega\in\cSI$, one has
\beq 
s(\omega)=\inf_{\Psi\in\cBb}\left(P(\Psi)+\ce_\Psi(\omega)\right). 
\label{eq:dual-var}
\eeq
\eenr
\eet

\begin{remark} The specific energy observable $E(\Phi)$ is not uniquely
determined by the relation
$$
\omega(E(\Phi))=\ce_\Phi(\omega)=\Llim\tfrac1{|\Lambda|}\omega(H_\Lambda(\Phi)).
$$
There are many $E\in\fAr$ for which the equality $\ce_\Phi(\omega)=\omega(E)$ 
holds for all $\omega\in\cSI$.
\footnote{For example, take $E=E(\Phi)+A$ where $A\in\fI$.} 
This basic observation provides one natural route to the notion
of physical equivalence that will play a central role in this paper.
\end{remark}
\begin{remark}\label{rem:fusina}
Much more is known about the thermodynamic formalism
and the general structure of $\cSeq(\Phi)$. Let us mention the following:
\ben 
\item For any weight $h$, the set $\{\Phi\in\cB_h\mid|\cSeq(\Phi)|=1\}$ 
is residual in $\cB_h$, \cite[Theorem~3(3)]{Lanford1968}.

\item On the other extreme, there is a dense subset of $\cBb$ whose elements
$\Phi$ are such that $\cSeq(\Phi)$ contains uncountably many $\varphi$-ergodic
states, \cite[Theorem~V.2.2(c)]{Israel1979}.

\item Although in general the infimum in~\eqref{eq:dual-var} may not be achieved, 
for any weight $h$ the set 
\[
\bigcup_{\Phi\in\cB_h}\cSeq(\Phi)
\]
is {\it norm} dense in $\cSI$, see Observation~3 in the proof 
of~\cite[Theorem~6.2.44]{Bratteli1981}. 

\item For any finite set $S=\{\omega_1,\ldots,\omega_N\}$ of $\varphi$-ergodic 
elements of $\cSI$ with finite specific entropy, there exists $\Phi\in\cBb$ such 
that $S$ is contained in the set of $\varphi$-ergodic elements of $\cSeq(\Phi)$, 
\cite[Theorem~V.2.2(a)]{Israel1979}.
\een
\end{remark}

\paragraph{Tangent functionals to the pressure.} 

As a corollary to Theorem~\ref{thm:RuIso}, the adjoint of the map~\eqref{eq:hatR} 
is an isometric isomorphism\footnote{Recall the notation introduced in the
proof of Proposition~\ref{prop:vanish_on_average}.}
\beq
\fI^\perp=\left(\fAr/\fI\right)^\ast\to\left(\cBb/\cI\right)^\ast=\cI^\perp.
\label{eq:RuAgain}
\eeq

Rewriting the variational formula~\eqref{eq:GibbsPrince} as
$$
P(\Phi)=\sup_{\omega\in\cSI}\left(s(\omega)-\omega(R(\Phi))\right),
$$ 
shows that the pressure map $P:\cBb\to\RR$ descends to the quotient $\cBb/\cI$.
Moreover, we observe that whenever $R(\Psi)=\one$, respectively $R(\Psi)\ge0$,
then $P(\Phi+c\Psi)=P(\Phi)-c$, respectively $P(\Phi+\Psi)\le P(\Phi)$.

In this perspective, $\alpha\in\cI^\perp$ is tangent to $P$ at $\Phi$ 
if $P(\Phi+\Psi)\ge P(\Phi)+\alpha(\Psi)$ for all $\Psi\in\cBb/\cI$.
By~\eqref{eq:RuAgain}, such a tangent functional takes the form 
$\alpha(\Psi)=-\mu(R(\Psi))$ for some $\mu\in\fI^\perp$. Taking 
the above observations into account, we further deduce that
$\mu\in\cSI\cap\fI^\perp$, and hence 
\beq
\alpha(\Psi)=-\ce_\Psi(\mu).
\label{eq:tangentForm}
\eeq 
With this in mind, the next result shows that tangent functionals to
the pressure at $\Phi$ are in one-to-one correspondence with equilibrium
states for $\Phi$, see~\cite{Ruelle1978,Israel1979}.

\bet\label{thm:tangent} 
$\omega\in\cSeq(\Phi)$ iff, for all $\Psi\in\cBb$, 
\[
P(\Phi+\Psi)-P(\Phi)\ge-\ce_\Psi(\omega).
\]
\eet

\paragraph{Ruelle's physical equivalence in $\cBb$.} 

\bep\label{prop:two-rpe} 
Let $\Phi,\Psi\in\cBb$ and $c\in\RR$. Then the following statements are equivalent.
\benr 
\item $\Phi-\Psi\in R^{-1}(\fI+c\one)$.
\item 
\[
\Llim\left(m_\Lambda(E(\Phi))-m_\Lambda(E(\Psi))\right)=c\one.
\]
\item
\[
\Llim\frac1{|\Lambda|}(H_\Lambda(\Phi)-H_\Lambda(\Psi))=c\one.
\]
\item For any $\omega\in\cSI$, 
$$
\ce_\Phi(\omega)-\ce_\Psi(\omega)=c.
$$
\item For any $\omega\in\cSI$, 
\[
\omega(E(\Phi))-\omega(E(\Psi))=c.
\]
\eenr
Moreover, any of these statements implies $P(\Psi)-P(\Phi)=c$ and $\cSeq(\Phi)=\cSeq(\Psi)$.
\eep

\proof
\textbf{(i)}$\Rightarrow$\textbf{(ii)} This is a direct consequence of 
Propositions~\ref{prop:vanish_on_average}(i)$\Rightarrow$(ii) and~\ref{prop:ru-R}(ii).

\medskip\noindent\textbf{(ii)}$\Rightarrow$\textbf{(iii)}
We first establish the estimate
\beq
\Llim\left\|\frac1{|\Lambda|}H_\Lambda(\Theta)-m_\Lambda(E(\Theta))\right\|=0
\label{eq:octopus}
\eeq
for every $\Theta\in\cBb$. Suppose first that $\Theta\in\cBf$ and that
$\Theta(X)=0$ for $\diam(X)>r$. For $\Lambda\in\cF$,
$$
\sum_{X\subset\Lambda}\Theta(X)-\sum_{x\in\Lambda}\varphi^x\left(\sum_{X\ni0}\frac{\Theta(X)}{|X|}\right)
=-\sum_{x\in\Lambda}\sum_{\twol{X\cap\Lambda^c\neq\emptyset}{X\ni x}}\frac{\Theta(X)}{|X|}.
$$
Hence, setting $\Lambda_r=\{x\in\Lambda\mid\dist(x,\Lambda^c)>r\}$,
\[
\left\|\frac1{|\Lambda|}H_\Lambda(\Theta)-m_\Lambda(E(\Theta))\right\|
\leq\frac{|\Lambda\setminus\Lambda_r|}{|\Lambda|}\|\Theta\|_\mathrm{b},
\]
and~\eqref{eq:octopus} follows along centered cubes. For general
$\Theta\in\cBb$, choose $\Theta_n\in\cBf$ with
$\|\Theta_n-\Theta\|_\mathrm{b}\to0$. Since, uniformly in $\Lambda$,
\[
\left\|m_\Lambda(E(\Upsilon))\right\|\leq\|\Upsilon\|_\mathrm{b},
\qquad
\frac1{|\Lambda|}\|H_\Lambda(\Upsilon)\|\leq\|\Upsilon\|_\mathrm{b},
\]
the finite-range estimate extends to $\Theta$.

Taking $\Theta=\Phi-\Psi$ in~\eqref{eq:octopus},
condition~\textbf{(ii)} yields~\textbf{(iii)}.

\medskip\noindent\textbf{(iii)}$\Rightarrow$\textbf{(iv)} is an immediate consequence of Theorem~\ref{thm:Gibbs}(i).

\medskip\noindent\textbf{(iv)}$\Rightarrow$\textbf{(v)} is a consequence of Definition~\eqref{def:tpe}.

\medskip\noindent\textbf{(v)}$\Rightarrow$\textbf{(i)} Invoking Propositions~\ref{prop:ru-R}~\textbf{(ii)}
and~\ref{prop:vanish_on_average}~\textbf{(i)} $\Rightarrow$~\textbf{(iv)} gives that $\omega(R(\Phi))-\omega(R(\Psi))=c$
for all $\omega\in\cSI$. The claim now follows from Proposition~\ref{prop:vanish_on_average}~\textbf{(iv)} $\Rightarrow$~\textbf{(i)}. 

\medskip\noindent The last statement follows from Part~(iv) and the Gibbs variational principle~\eqref{eq:GibbsPrince}.
\qed

For later reference, let us note the following corollary of this proof.
\bec\label{cor:Abel}
For all $\Phi\in\cBb$, one has $\Llim\left\|\frac1{|\Lambda|}H_\Lambda(\Phi)-m_\Lambda(E(\Phi))\right\|=0$.
\eec

The next definition is parallel to~\cite[Section~4.7]{Ruelle1978}.

\begin{definition} Two interactions $\Phi,\Psi\in\cBb$ are called
\textsl{physically equivalent,} written $\Phi\sim\Psi$, if
\[
R(\Phi-\Psi)\in\fI+\RR\one.
\]
They are called \textsl{strongly physically equivalent,} written
$\Phi\approx\Psi$, if
\[
R(\Phi-\Psi)\in\fI.
\]
These are equivalence relations compatible with the vector space structure of
$\cBb$.
\end{definition}

\begin{remark} Equivalently, $\Phi\sim\Psi$ iff
$\Phi\equiv\Psi\,(\mathrm{mod}\,\cI+\cR)$, and
$\Phi\approx\Psi$ iff $\Phi\equiv\Psi\,(\mathrm{mod}\,\cI)$.
By Theorem~\ref{thm:RuIso}, the same notation will be used for the induced
equivalence relations modulo $\fI+\RR\one$ and modulo $\fI$ on
$\fAr$. Proposition~\ref{prop:two-rpe} shows, in particular, that physical
equivalence is precisely equality of the specific energy functionals up to an
additive constant.
\end{remark}
\bep\label{prop:physeq} 
If $\Phi\sim\Psi$, then for any $t\in[0,1]$, 
\[
P((1-t)\Phi+t\Psi)=(1-t)P(\Phi)+tP(\Psi). 
\]
\eep

\proof By Proposition~\ref{prop:two-rpe}(iv), there is $c\in\RR$ such
that $\ce_\Psi(\omega)-\ce_\Phi(\omega)=c$ for all $\omega\in\cSI$.
The last assertion of Proposition~\ref{prop:two-rpe} gives
$P(\Psi)-P(\Phi)=-c$. Hence, by the Gibbs variational principle,
\[
\begin{split}
P((1-t)\Phi+t\Psi)
&=\sup_{\omega\in\cSI}
\big(s(\omega)-\ce_\Phi(\omega)-t(\ce_\Psi(\omega)-\ce_\Phi(\omega))\big)\\
&=P(\Phi)-tc=(1-t)P(\Phi)+tP(\Psi).
\end{split}
\]
\qed

The last result is complemented by 
\bep\label{prop:Pafin}
Suppose that $\Phi,\Psi\in\cBb$. Then the following statements are equivalent.
\benr
\item For any $t\in[0,1]$, 
\[
P((1-t)\Phi+t\Psi)=(1-t)P(\Phi)+tP(\Psi).
\]
\item $\cSeq(\Phi)\cap\cSeq(\Psi)\neq\emptyset$.
\eenr
\eep

\proof \textbf{(i)}$\Rightarrow$\textbf{(ii)} Choose
$\omega\in\cSeq((\Phi+\Psi)/2)$. By \textbf{(i)} and the
linearity of the specific energy in the interaction,
\[
s(\omega)=\frac12\big(\ce_\Phi(\omega)+P(\Phi)\big)
+\frac12\big(\ce_\Psi(\omega)+P(\Psi)\big).
\]
The variational principle gives
$s(\omega)\leq\ce_\Phi(\omega)+P(\Phi)$ and
$s(\omega)\leq\ce_\Psi(\omega)+P(\Psi)$.
Hence both inequalities are equalities, and
$\omega\in\cSeq(\Phi)\cap\cSeq(\Psi)$.

\medskip\noindent\textbf{(ii)}$\Rightarrow$\textbf{(i)} Let $\omega\in\cSeq(\Phi)\cap\cSeq(\Psi)$. 
Then the relations 
\[
s(\omega)=\ce_\Phi(\omega)+P(\Phi),\qquad s(\omega)=\ce_\Psi(\omega)+P(\Psi),
\]
give that, for $t\in[0,1]$, 
\[
s(\omega)=(1-t)s(\omega)+ts(\omega)=\ce_{(1-t)\Phi+t\Psi}(\omega)+(1-t)P(\Phi)+tP(\Psi).
\]
This equality and the dual variational principle, Theorem~\ref{thm:Gibbs}~\textbf{(iv)}, yield that 
\[
(1-t)P(\Phi)+tP(\Psi)\leq P((1-t)\Phi+t\Psi).
\]
The opposite inequality follows from the convexity of the pressure and~\textbf{(i)} follows. \qed

\paragraph{Weak Gibbsianity and regular pairs.} 

Set
\[
\omega_{\Phi,\Lambda}^\mathrm{c}\coloneq\e^{-H_\Lambda(\Phi)-P_\Lambda(\Phi)},\qquad 
P_\Lambda(\Phi)\coloneq\log\tr_{\cH_\Lambda}(\e^{-H_\Lambda(\Phi)}).
\]
\begin{definition}\label{def:wg}
A state $\omega\in\cSI$ is called \textsl{weak Gibbs} for the
interaction $\Phi\in\cBb$ if there exist constants $c_\Lambda>0$ satisfying
\begin{equation}
\Llim\frac{c_\Lambda}{|\Lambda|}=0,
\label{eq:mil-er}
\end{equation}
and such that, for all $\Lambda$,
$$
\e^{-c_\Lambda}\omega_{\Phi,\Lambda}^\mathrm{c}\leq\omega_\Lambda\leq \e^{c_\Lambda} \omega_{\Phi,\Lambda}^\mathrm{c}.
$$
$\omega$ is called \textsl{log weak Gibbs} for $\Phi$ if there exist
constants $c_\Lambda >0$ satisfying~\eqref{eq:mil-er} and such that, for all
$\Lambda$,
$$
- c_\Lambda+\log \omega_{\Phi,\Lambda}^\mathrm{c}\leq\log\omega_\Lambda
\leq c_\Lambda+\log\omega_{\Phi,\Lambda}^\mathrm{c}.
$$
We denote by $\cSwg(\Phi)$ the set of all weak Gibbs states for
$\Phi$, and by $\cSlwg(\Phi)$ the set of all log weak Gibbs states.
\end{definition}

The next results are from~\cite[Propositions~2.10,2.14]{Jaksic2023a}.\footnote{The proofs in~\cite{Jaksic2023a},
for $\Phi\in\cB^r$ (see Section~\ref{sec:exp-dec}), extend without change to $\Phi\in\cBb$.}

\bep\label{prop:weakGibbs_are_eq}
For $\Phi,\Psi\in\cBb$, the following hold:
\benr
\item $\omega\in\cSlwg(\Phi)\Leftrightarrow
\displaystyle\limsup_{\{\Lambda\}}\frac1{|\Lambda|}\|\log\omega_\Lambda-\log\omega_{\Phi,\Lambda}^c\|=0$.
\item $\cSwg(\Phi)\subset\cSlwg(\Phi)\subset\cSeq(\Phi)$.
\item Suppose that $\omega\in\cSlwg(\Phi)$. Then for any $\nu\in\cSI$, 
\beq
s(\nu|\omega)=-s(\nu)+\ce_\Phi(\nu)+P(\Phi).
\label{eq:def-rp}
\eeq
In particular $s(\nu|\omega)=0$ iff $\nu\in\cSeq(\Phi)$.
\eenr
\eep

\begin{definition}
A pair $(\omega,\Phi)\in\cSI\times\cBb$ is called regular if~\eqref{eq:def-rp} holds for any $\nu\in\cSI$.
\end{definition}

\preASS{WG}{WG}\preASS{LWG}{LWG}\preASS{R}{R}
\begin{definition}\label{WG}\label{LWG}\label{R}
\benr
\item A set $\cB\subset\cBb$ has the property \textbf{(WG)} if $\cSwg(\Phi)=\cSeq(\Phi)$ for any $\Phi\in\cB$.
\item A set $\cB\subset\cBb$ has the property \textbf{(LWG)} if $\cSlwg(\Phi)=\cSeq(\Phi)$ for any $\Phi\in\cB$.
\item A set $\cB\subset\cBb$ has the property \textbf{(R)} if, for any $\Phi\in\cB$ and any $\omega\in\cSeq(\Phi)$,
the pair $(\omega,\Phi)$ is regular. 
\eenr
\end{definition}

By Proposition~\ref{prop:weakGibbs_are_eq}~\textbf{(ii)} +~\textbf{(iii)}, \preWG$\Rightarrow$\preLWG$\Rightarrow$\preR.

\paragraph{Some comments on $\cBb$.} 
It is well-known that the space $\cBb$ is too big to properly describe physical 
systems, see~\cite{Bratteli1981, Israel1979}. Among the related problems, let us
mention: 

\ben 
\item The result of Israel~\cite{Israel1979} described in Remark~\ref{rem:fusina}(4)
is pathological from the physical point of view. In particular, it implies the existence
of physically non-equivalent interactions $\Phi,\Psi\in\cBb$ such that 
$\cSeq(\Phi)\cap\cSeq(\Psi)\not=\emptyset$. 

\item The surface energies
\beq 
W_\Lambda(\Phi)\coloneq\sum_{\twol{X\cap \Lambda\not=\emptyset}{X\cap \Lambda^c\not=\emptyset}}\Phi(X)
\label{eq:sur}
\eeq
are not defined for all $\Phi\in\cBb$. On physical grounds one expects that the surface
energies should play a central role in the study of equilibrium states.

\item The local Hamiltonians $H_\Lambda(\Phi)$ may not generate a $C^\ast$-dynamics on $\fA$.
Consequently, the dynamical aspects of quantum spin systems are out of reach.
\een

We proceed to describe various classes of interactions in $\cBb$ which are physically more natural 
for the study of quantum spin systems.

\subsubsection{$\cBs$ -- the small space of interactions.} 

This space corresponds to the choice $h(X)=|X|$ in~\eqref{eq:h-norm}. We denote the 
induced norm by $\|\argdot\|_\mathrm{s}$. The surface energies~\eqref{eq:sur} are 
well-defined for $\Phi\in\cBs$ and so is
$$
U_\Lambda(\Phi)\coloneq\sum_{X\cap\Lambda\neq\emptyset}\Phi(X)=H_\Lambda(\Phi)+W_\Lambda(\Phi).
$$
\bep\label{prop:BsStuf}
If $\Phi\in\cBs$, then the following hold. 
\benr
\item $\|W_\Lambda(\Phi)\|\le|\Lambda|\,\|\Phi\|_\mathrm{s}$.
\item $\ds\Lim\frac1{|\Lambda|}\|W_\Lambda(\Phi)\|=0$.
\item $\ds\Llim\left\|\frac1{|\Lambda|}U_\Lambda(\Phi)-m_\Lambda(E(\Phi))\right\|=0$.
\eenr
\eep

\proof \textbf{(i)} For $\Phi\in\cBs$, we have
\begin{align*}
\|W_\Lambda(\Phi)\|
&\leq\sum_{\twol{X\cap\Lambda\neq\emptyset}{X\cap\Lambda^c\neq\emptyset}}\|\Phi(X)\|
\leq\sum_{x\in\Lambda}\sum_{\twol{X\ni x}{X\cap\Lambda^c\neq\emptyset}}\|\Phi(X)\|\\
&=\sum_{x\in\Lambda}\sum_{\twol{X\ni0}{(X+x)\cap\Lambda^c\neq\emptyset}}\|\Phi(X)\|\\
&=\sum_{X\ni0}\|\Phi(X)\|
\left|\left\{x\in\Lambda\mid(X+x)\cap\Lambda^c\neq\emptyset\right\}\right|\\
&\leq|\Lambda|\sum_{X\ni0}\|\Phi(X)\|
=|\Lambda|\,\|\Phi\|_\mathrm{s}.
\end{align*}

\medskip\noindent\textbf{(ii)} Dividing the preceding estimate before its last
inequality by $|\Lambda|$ gives
\[
\frac1{|\Lambda|}\|W_\Lambda(\Phi)\|
\leq\sum_{X\ni0}\|\Phi(X)\|
\frac{\left|\left\{x\in\Lambda\mid(X+x)\cap\Lambda^c\neq\emptyset\right\}\right|}{|\Lambda|}.
\]
For fixed $X\ni0$, the set in the numerator is contained in
\[
\left\{x\in\Lambda\mid\dist(x,\Lambda^c)\leq\diam(X)\right\}.
\]
Its relative cardinality therefore tends to zero along every van~Hove net,
while it is bounded by one. Since
$\sum_{X\ni0}\|\Phi(X)\|=\|\Phi\|_\mathrm{s}<\infty$, dominated convergence
yields Part~\textbf{(ii)}.

\medskip\noindent\textbf{(iii)} Part~\textbf{(ii)} applies in particular along
centered cubes, and the assertion follows from Corollary~\ref{cor:Abel} and
$U_\Lambda(\Phi)=H_\Lambda(\Phi)+W_\Lambda(\Phi)$. \qed

To $\Phi\in\cBs$ one can associate the $\ast$-derivation $\delta_\Phi$ acting on $\fAloc$ 
by the formula 
\[
\delta_\Phi(A)\coloneq\sum_{X\in\cF}\i[\Phi(X),A].
\]
The following basic result holds; see~\cite[Theorem~3.2.50,Proposition~3.2.52]{Bratteli1987} 
and~\cite[Proposition~6.2.3]{Bratteli1981}.

\bet\label{thm:deltaPhi}
The derivation $\delta_\Phi$ is norm-closable on $\fAloc$.\footnote{We will always 
denote its closure by the same letter.} Its closure generates a $C^\ast$-dynamics 
$\tau_\Phi$ on $\fA$ if and only if $(\Id\pm\delta_\Phi)\fAloc$ is dense in $\fA$. 
In this case, for all $A\in\fA$, 
\[
\tau_\Phi^t(A)=\Llim\e^{\i tH_\Lambda(\Phi)}A \e^{-\i tH_\Lambda(\Phi)},
\]
and the limit is uniform for $t$ in compact subsets of\/ $\RR$. 
\eet

We denote by $\cBsd$ the set of all $\Phi$'s in $\cBs$ for which $(\Id\pm \delta_\Phi)\fAloc$ 
is dense in $\fA$. For $\Phi \in\cBsd$ we denote by $\cSKMS(\Phi)$ the set of \textit{translation invariant}
$1$-KMS states for $\tau_\Phi$.

\bet\label{thm:KMSeq} 
For $\Phi\in\cBsd$, 
\[ 
\cSKMS(\Phi)=\cSeq(\Phi).
\]
\eet
\begin{remark} The classical results in~\cite[Propositions~6.2.41 and~6.2.42]{Bratteli1981}
yield $\cSKMS(\Phi)\subset\cSeq(\Phi)$, with  equality 
established only in the case where $\Phi$ is in the space $\cBr$ defined in
Section~\ref{sec:exp-dec} below. The equality in the case $\Phi\in\cBsd$ 
is a consequence of~\cite[Theorems~A and~B]{Araki2003}.
\end{remark}

A first fact making $\cBsd$ a physically natural class of interactions is the following 
result which is a generalization of~\cite[Theorem III.4.2]{Israel1979}, where 
it is proven under the more restrictive assumption $\Phi,\Psi\in\cBr$.\footnote{The space $\cB^r\subset\cBsd$
is discussed in the next section.} 

\bet\label{thm:phys-eq} 
Suppose that $\Phi,\Psi\in\cBsd$. Then the following statements 
are equivalent: 
\benr
\item $\Phi\sim\Psi$. 
\item For any $t \in [0,1]$, 
\[ P((1-t)\Phi+t \Psi)=(1-t)P(\Phi) + t P(\Psi). \]
\item $\cSeq(\Phi)\cap\cSeq(\Psi)\neq\emptyset$.
\item $\tau_\Phi=\tau_\Psi$. 
\item $\cSeq(\Phi)=\cSeq(\Psi)$.
\eenr
\eet 

\proof 
\textbf{(i)}$\Rightarrow$\textbf{(ii)}$\Rightarrow$\textbf{(iii)} Follow from 
Propositions~\ref{prop:physeq} and~\ref{prop:Pafin} respectively.

\medskip\noindent\textbf{(iii)}$\Rightarrow$\textbf{(iv)} By Theorem~\ref{thm:KMSeq}, there is a state in $\cSI$
which is $1$-KMS for $\tau_\Phi$ and $\tau_\Psi$. Theorem~\ref{thm:israel}~\textbf{(iii)}
implies $\tau_\Phi=\tau_\Psi$.

\medskip\noindent\textbf{(iv)}$\Rightarrow$\textbf{(v)} Follows from Theorem~\ref{thm:KMSeq}.

\medskip\noindent\textbf{(v)}$\Rightarrow$\textbf{(i)} Clearly~\textbf{(v)}$\Rightarrow$~\textbf{(iii)} $\Rightarrow$~\textbf{(iv)}, 
so the claim reduces to~\textbf{(iv)} $\Rightarrow$~\textbf{(i)}. Since $\delta_\Phi=\delta_\Psi$, invoking Theorem~\ref{thm:deltaPhi} it suffices 
to show that any $\Phi\in\cBsd$ such that
\beq
\sum_{X\in\cF}[\Phi(X),A]=0
\label{eq:paperlog}
\eeq
for all $A\in\fAloc$ satisfies $\Phi\in\cI+\cR$. Now, for $A\in\fA_\Lambda$, we
can rewrite~\eqref{eq:paperlog} as $\left[U_\Lambda(\Phi),A\right]=0$. Taking the partial normalized trace over $\fA_{\Lambda^c}$ and using the
triviality of the center of $\fA_\Lambda$, we obtain
\[
H_\Lambda(\Phi)+\tr_{\Lambda^c}(W_\Lambda(\Phi))=c_\Lambda\one
\]
for some $c_\Lambda\in\RR$. Hence, by Proposition~\ref{prop:BsStuf}(ii)--(iii),
for any $\omega\in\cSI$,
\[
\omega(E(\Phi))=\Llim\frac{c_\Lambda}{|\Lambda|}\eqcolon c,
\]
and the claim follows from Proposition~\ref{prop:two-rpe}.\qed

A second reason for singling out $\cBsd$ as a physically natural class is
conservation of the specific energy, as stated in the following extension
of~\cite[Theorem~2.8(ii)]{Jaksic2023a}.
\bet\label{thm:energyp}
Suppose that $\Phi\in\cBsd$. Then
$$
\ce_\Phi(\omega\circ\tau_\Phi^t)=\ce_\Phi(\omega)
$$
holds for all $t\in\RR$ and $\omega\in\cSI$.
\eet

The proof, given in~\cite[Section~3.2]{Jaksic2023a} for $\Phi\in\cB^r$,
extends to $\Phi\in\cBsd$ by noticing that every occurrence of the norm $\|\argdot\|_r$
can be replaced by $\|\argdot\|_\mathrm{s}$.

We finish with a simple consequence of the Gibbs variational principle which
gives the non-strict form of the second signature of irreversibility.

\bep\label{prop:energy-monotonicity}
Suppose that $\Phi\in\cBsd$ is such that $\tau_\Phi$ preserves the specific
entropy, and let $\omega\in\cSeq(\Psi)$ for some $\Psi\in\cBb$. Then, for all
$t\in\RR$,
\[
\ce_\Psi(\omega\circ\tau_\Phi^t)\geq \ce_\Psi(\omega).
\]
Consequently, for every $\omega_+\in\Sigma_+(\omega,\Phi)$,
\[
\ce_\Psi(\omega_+)\geq \ce_\Psi(\omega).
\]
\eep

\proof
The Gibbs variational principle and conservation of the specific entropy give
\[
0\leq
-s(\omega\circ\tau_\Phi^t)
+\ce_\Psi(\omega\circ\tau_\Phi^t)+P(\Psi)
=
\ce_\Psi(\omega\circ\tau_\Phi^t)-\ce_\Psi(\omega),
\]
where in the last equality we used $\omega\in\cSeq(\Psi)$. Hence
\[
\ce_\Psi(\bar\omega_T)\geq\ce_\Psi(\omega)
\]
for all $T>0$. The last assertion follows by weak-$^\ast$ continuity of
$\ce_\Psi$. \qed

\subsubsection{$\cBr$ -- the spaces of exponentially decaying interactions} 
\label{sec:exp-dec}

These spaces correspond to the weights $h(X)=|X|\e^{r(|X|-1)}$ with $r>0$. 
The corresponding norms are 
\[
\|\Phi\|_r\coloneq\sum_{X\ni0}\e^{r(|X|-1)}\|\Phi(X)\|.
\]
As discussed in~\cite{Bratteli1981,Israel1979}, for many purposes $\cBr$
provides a natural framework for the quantum dynamics of lattice spin systems.
In particular, all results of~\cite{Jaksic2023a} 
were formulated in terms of this space of interactions.  

We summarize some of the basic properties of $\cBr$,
see~\cite[Theorems~2.3, 2.4, 2.8 and~2.11]{Jaksic2023a}. 
 
\bet\label{thm:first-s}
If $\Phi\in\cBr$, then: 
\benr
\item $\Phi\in\cBsd$ and Theorems~\ref{thm:KMSeq}, \ref{thm:phys-eq} and~\ref{thm:energyp} apply. 
\item For all $A\in\fAloc$ the map $\RR\ni t\mapsto\tau_\Phi^t(A)$
has an analytic extension to the strip $|\Im z|<r/2\|\Phi\|_r$, and for any 
$z$ in this strip, 
\[
\|\tau_\Phi^z(A)\|\leq\|A\|\e^{r|\supp(A)|}\frac{r}{r- 2\|\Phi\|_r|\Im z|}.
\]
\item If $r>\log\dim\cH$ and $\|\Phi\|_r<\frac1{2\dim\cH}$, then $\cSeq(\Phi)$ is a singleton, 
so that $\cSwg(\Phi)=\cSlwg(\Phi)=\cSeq(\Phi)$. 
\item If $\|\Phi\|_r<r$, then $\cSwg(\Phi)=\cSlwg(\Phi)=\cSeq(\Phi)$. 
\item $s(\omega)=s(\omega\circ\tau_\Phi^t)$ for all $\omega\in\cSI$ and $t\in\RR$. 
\eenr
\eet
 
In addition to Parts~\textbf{(iv)--(v)} of the above theorem, the following conservation law holds,
see~\cite{Jaksic2026b}. 
\bet\label{thm:pres-law}
Let $(\omega,\Psi)$ be a regular pair such that either
\benr
\item $\|\Psi\|_r<r$ for some $r>0$, or
\item $d=1$ and $\Psi\in\cBf$.
\eenr
Let $\Phi\in\cBf$ and $\nu\in\cSI$. Then there exists $T>0$ such that, for $|t|<T$, the specific relative
entropy of $\nu\circ\tau_\Phi^t$ with respect to
$\omega\circ\tau_\Phi^t$ exists and satisfies
$$
s(\nu\circ\tau_\Phi^t|\omega\circ\tau_\Phi^t)=s(\nu|\omega).
$$
One can take $T=\infty$ in case $\mathrm{(ii)}$ or if $\omega$ is $\tau_\Phi$-invariant.
\eet

\subsubsection{$\cBdiam{r}$-- the diameter spaces} 

In the previously introduced spaces of interactions, the weight $h$ was a
function of $|X|$ alone. Weights depending also on $\diam(X)$, the diameter of
$X$, naturally arise in the study of propagation properties of the
$C^\ast$-dynamics. A well-known example is
\[ 
h(X)=|X|^2(\dim\cH)^{2|X|}\e^{r\diam(X)},
\]
for $r>0$, see~\cite[Proposition 6.2.9]{Bratteli1981}. Building on more recent developments of
Lieb--Robinson bounds~\cite{Nachtergaele2019}, we introduce the weight
\[ 
h(X)= |X|^2(1+\diam(X))^r,
\]
and the associated norm $\|\argdot\|_{\diam^r}$, with $r>0$. The corresponding space is denoted  by 
$\cBdiam{r}$. Its basic properties are summarized in 
\bet\label{thm:second-s}
Suppose that $\Phi\in\cBdiam{r}$. 
\benr
\item If $r>d$, then $\Phi\in\cBsd$, and
Theorems~\ref{thm:KMSeq} and~\ref{thm:phys-eq} apply. 
\item If $d=1$ and $r\geq1$, then $\cSeq(\Phi)$ is a singleton. 
\item If $r>2d$, then
\[
s(\omega)=s(\omega\circ\tau_\Phi^t)
\]
for all $\omega\in\cSI$ and $t\in\RR$.
\eenr
\eet 

\proof
\textbf{(i)} This follows from~\cite[Theorem~4.8]{Bru2017}.

\medskip\noindent
\textbf{(ii)} If $d=1$ and $r\geq1$, then
\[
\sum_{X\ni0}\diam(X)\|\Phi(X)\|
\leq
\sum_{X\ni0}|X|(1+\diam(X))^r\|\Phi(X)\|
=
\|\Phi\|_{\diam^r}<\infty.
\]
The one-dimensional uniqueness theorem of Araki~\cite{Araki1975},
see also~\cite[Theorem~IV.6.1]{Simon1993}, therefore yields that
$\cSeq(\Phi)$ is a singleton.

\medskip\noindent
\textbf{(iii)} See the companion paper~\cite{Jaksic2026b}.
\qed

In this paper, we will work with the spaces $\cBsd$, $\cBdiam{r}$ for $r>d$, and $\cBr$. 
We emphasize that the last two spaces are \textsl{not comparable}: $\cap_{s>0}\cB^s$ is not contained 
in any $\cBdiam{r}$, and $\cap_{s>0}\cBdiam{s}$ is not contained in any $\cBr$.

\subsubsection{$\cBf$ -- finite range interactions} 
\label{sec:finite-range}
Recall that the space $\cBf$ of finite range interactions is dense
in $\cB_h$ for any $h$. In dimension $d=1$, these interactions have
special properties. The following result is implicit in
Araki's work~\cite{Araki1969}.

\bet\label{thm:pre-ara}
Suppose that $d=1$. For any $\Phi\in\cBf$, $\cSwg(\Phi)=\cSlwg(\Phi)=\cSeq(\Phi)$ is a singleton.
\eet

Using Araki's estimates, the companion paper~\cite{Jaksic2026b}
establishes the following one-dimensional case  of
Theorem~\ref{thm:pres-law}.
 
\bet\label{thm:pres-law-1}
Suppose that $d=1$ and $\Psi,\Phi\in\cBf$. If the pair $(\omega,\Psi)$ is regular
then, for all $\nu\in\cSI$ and $t\in\RR$, $s(\nu\circ\tau_\Phi^t|\omega\circ\tau_\Phi^t)$ 
exists and does not depend on $t$.
\eet

Going back to the Quantum Stein Lemma discussed in Section~\ref{sec-qss-setting}, the following
is known regarding Theorem~\ref{thm:stein}.

\bet\label{thm-qstein}
Suppose that $\Phi, \Psi\in \cBf$, and that $\nu\in\cSeq(\Phi)$, $\omega\in\cSeq(\Psi)$. 
\benr
\item If $d=1$, then the limit~\eqref{eq:lim} exists for all $\alpha\in\RR$, and the 
function $e$ is real analytic on $\RR$. Moreover, $e^\prime(0)<e^\prime(1)$ 
iff\, $\Phi\not\sim\Psi$. If $\Phi\sim\Psi$, then $e$ vanishes identically on $\RR$.
\item If $d>1$, then for any $a>1$ there is $\delta>0$ such that if 
$\|\Phi\|_\mathrm{s}<\delta$ and $\|\Psi\|_\mathrm{s}<\delta$, then 
the limit~\eqref{eq:lim} exists for all $\alpha\in(-a,a)$, and the function 
$e$ is real analytic on this interval. Moreover, $e^\prime(0)<e^\prime(1)$ iff $\Phi\not\sim\Psi$. 
If $\Phi\sim\Psi$, then $e$ vanishes identically on $(-a,a)$.
\eenr
\eet
Part~\textbf{(i)} was established in~\cite{Ogata2010} and Part~\textbf{(ii)}
in~\cite{Netocny2004}; see also~\cite[Section~3.4.2]{Jaksic2012}.
In the stated form, Theorem~\ref{thm-qstein} follows from these results
together with the weak Gibbsianity of $\nu$ and $\omega$ under the above
assumptions; see~\cite[Theorem~2.11]{Jaksic2023a}.
To the best of our knowledge, Theorem~\ref{thm-qstein} remains the
state-of-the-art result regarding the validity of the assumptions of
Theorem~\ref{thm:stein}.

\subsubsection{The WG, LWG and R conjectures}
\label{sec:R-conj}

\conjecture{Conjecture-(WG)}{conjWG}{$\cBsd$ has the property~\preWG.}
\conjecture{Conjecture-(LWG)}{conjLWG}{$\cBsd$ has the property~\preLWG.}
\conjecture{Conjecture-(R)}{conjR}{$\cBsd$ has the property~\preR.}

Recall that for a given interaction space $\cB\subset\cBb$,
\preWG$\Rightarrow$\preLWG$\Rightarrow$\preR. We formulate the three conjectures
separately because they have  different strengths and may well have
different optimal domains of validity. The intermediate LWG property is also
natural since the logarithm is operator monotone.

Only partial results are known. We regard these conjectures as outstanding
open problems in quantum statistical mechanics.

Weaker conjectures are made by replacing $\cBsd$ with any of the spaces $\cBr$, $\cBdiam{r}$ with $r>d$, or 
$\cBf$. They are also open except for $\cBf$ in $d=1$. In this case the powerful estimates of Araki 
\cite{Araki1969} establish \conjWG. The other known result towards resolution of these
conjectures is Theorem~\ref{thm:first-s}(iv). 

An alternative characterization of the \preWG{} property is given in~\cite[Proposition 2.12]{Jaksic2023a} and of the 
\preLWG{} property in~\cite[Proposition 2.13 (2)]{Jaksic2023a}. 
Although they are stated for interactions in $\cBr$, the proof of the characterization of the \preWG{} property holds for $\Phi \in \cBsd$ and of 
the \preLWG{} property for $\Phi\in\cBb$.

\section{Dynamical steady states}
\label{sec:ess}

\subsection{Definition and basic properties}

Throughout this section we work in the setting and notation of the Introduction.
In particular, $\tau_\Phi$ preserves specific entropy, while
Theorem~\ref{thm:energyp} gives conservation of specific energy. Hence
$$
s(\nu\circ\tau_\Phi^t)=s(\nu),\qquad
\ce_\Phi(\nu\circ\tau_\Phi^t)=\ce_\Phi(\nu)
$$
for all $t\in\RR$ and $\nu\in\cSI$. We also recall that these relations imply 
\beq
s(\bar\omega_T)=s(\omega),\qquad
\ce_\Phi(\bar\omega_T)=\ce_\Phi(\omega),
\label{eq:baromegaTconserv}
\eeq
for all $T>0$, and that for every
$\omega_+\in\Sigma_+(\omega,\Phi)$,
\beq
s(\omega_+)\ge s(\omega),\qquad
\ce_\Phi(\omega_+)=\ce_\Phi(\omega).
\label{eq:omegaplusprop}
\eeq

The following result strengthens the irreversibility result of~\cite{Jaksic2023a}.
\bet\label{thm:jpt-o}
Suppose that $\omega\in\cSeq(\Psi)$ for some $\Psi\in\cBb$. Then, for any
$\omega_+\in\Sigma_+(\omega,\Phi)$,
\benr
\item $\ce_\Psi(\omega_+)\ge\ce_\Psi(\omega)$.
\item If $\cSeq(\Psi)=\{\omega\}$, then $\ce_\Psi(\omega_+)>\ce_\Psi(\omega)$. 
\eenr
\eet
\begin{remark} In~\cite{Jaksic2023a} the strict statement was proven under the
assumption that $\Phi,\Psi\in\cBr$ and that $(\omega,\Psi)$ is regular. The proof
extends to $\Phi\in\cBsd$ and $\Psi\in\cBb$, assuming conservation of the specific
entropy. The proof below also dispenses with the regularity assumption.
\end{remark}
\proof
Let $(T_n)_{n\in\NN}$ be such that $\bar\omega_{T_n}\to\omega_+$. By the
Gibbs variational principle and~\eqref{eq:baromegaTconserv},
\begin{align*}
0
&\leq -s(\bar\omega_{T_n})+\ce_\Psi(\bar\omega_{T_n})+P(\Psi)\\
&=-s(\omega)+\ce_\Psi(\bar\omega_{T_n})+P(\Psi)\\
&=\ce_\Psi(\bar\omega_{T_n})-\ce_\Psi(\omega).
\end{align*}
Since $\ce_\Psi$ is \wstar continuous, letting $n\to\infty$ gives $\ce_\Psi(\omega_+)\geq\ce_\Psi(\omega)$,
which proves Part~\textbf{(i)}.

Suppose now that equality holds. Then
\[
-s(\bar\omega_{T_n})+\ce_\Psi(\bar\omega_{T_n})+P(\Psi)\rightarrow0.
\]
The map $\nu\mapsto -s(\nu)+\ce_\Psi(\nu)+P(\Psi)$
is \wstar lower semicontinuous. Hence
\[
0\leq -s(\omega_+)+\ce_\Psi(\omega_+)+P(\Psi)\leq0,
\]
and therefore $\omega_+\in\cSeq(\Psi)$. If
$\cSeq(\Psi)=\{\omega\}$, then $\omega_+=\omega$, contradicting the standing
assumption that $\omega$ is not $\tau_\Phi$-invariant. This proves
Part~\textbf{(ii)}. \qed

\subsection{Setting the equilibrium temperature}
\label{sec-sett-temp}

The conservation of the specific energy $\ce_\Phi$, together with the Gibbs
variational principle, determines the equilibrium temperature associated with
the pair $(\omega,\Phi)$. We begin with the thermodynamic input, formulated
as a parametric version of Theorem~\ref{thm:Gibbs}~\textbf{(iii)} +~\textbf{(iv)}.

Since $\omega$ is not $\tau_\Phi$-invariant, Theorem~\ref{thm:phys-eq}
implies that $\Phi\not\sim0$. Hence, by Theorem~\ref{thm:Gibbs}~\textbf{(ii)} and
Theorem~\ref{thm:phys-eq}~\textbf{(ii)} $\Leftrightarrow$~\textbf{(iv)}, the map
$\RR\ni\beta\mapsto\cp(\beta)\coloneq P(\beta\Phi)$ is continuous
and strictly convex. The specific energy $\ce_\Phi$ is a \wstar{continuous}
affine map which achieves its mini\-mum and maxi\-mum values $e_\mathrm{min}$ and $e_\mathrm{max}$
on the \wstar{compact} convex set $\cSI$. The specific entropy $s$ is
a \wstar{}up\-per-semicontinuous affine map on $\cSI$. Hence, for 
$e\in[e_\mathrm{min},e_\mathrm{max}]$, it achieves its maximum
value on the \wstar{compact} convex set $\{\nu\in\cSI\mid\ce_\Phi(\nu)=e\}$,
\beq
\cs(e)\coloneq\max\{s(\nu)\mid\nu\in\cSI,\ce_\Phi(\nu)=e\}\in[0,\log\dim\cH].
\label{eq:csDef}
\eeq
Setting $\cs(e)=-\infty$ for $e\in\RR\setminus[e_\mathrm{min},e_\mathrm{max}]$
defines an upper-semicontinuous concave function on $\RR$.

We can now rewrite the Gibbs variational principle~\eqref{eq:GibbsPrince} in the 
following parametric form
\beq
\cp(\beta)=\max_{\nu\in\cSI}\left(s(\nu)-\beta\ce_\Phi(\nu)\right)
=\max_{e\in\RR}(\cs(e)-\beta e).
\label{eq:AizLieb1}
\eeq
The last formula identifies $\cp$ with the Legendre--Fenchel transform 
of the closed proper convex function $\hat\cs:e\mapsto-\cs(-e)$.\footnote{This convex 
function is proper since it never takes the value $-\infty$ and is finite for 
$e\in[-e_\mathrm{max},-e_\mathrm{min}]$. It is lower-semicontinuous, and hence 
closed (see~\cite[Sections~4 and~7]{Rockafellar1970}).} Since $\cp$ is strictly convex,
$\hat\cs$ is essentially smooth (see~\cite[Section~26]{Rockafellar1970}).
This means that $\cs$ is differentiable on the open interval 
$(e_\mathrm{min},e_\mathrm{max})$, with a monotone decreasing derivative
satisfying 
$$
\lim_{e\downarrow e_\mathrm{min}}\cs'(e)=+\infty,\qquad
\lim_{e\uparrow e_\mathrm{max}}\cs'(e)=-\infty.	
$$
It will be convenient to denote this function by 
$\beta_\ast:[e_\mathrm{min},e_\mathrm{max}]\to[-\infty,+\infty]$.
\bet\label{thm:GibbsII}
\benr
\item There is a unique $e_0\in(e_\mathrm{min},e_\mathrm{max})$ such that 
$\cs(e_0)=\max\{\cs(e)\mid e\in\RR\}=\log\dim\cH$, the function $\cs$ being
monotone increasing on the interval $[e_\mathrm{min},e_0]$ and monotone decreasing
on the interval $[e_0,e_\mathrm{max}]$.
\item The parametric form of the Gibbs variational principle~\eqref{eq:AizLieb1}
admits the following dual principle: For any $e\in(e_\mathrm{min},e_\mathrm{max})$, 
\beq
\cs(e)=\min_{\beta\in\RR}\left(\cp(\beta)+\beta e\right),
\label{eq:smalldualGibbs}
\eeq
the minimum being achieved at $\beta=\beta_\ast(e)$, i.e.,
\beq
\cs(e)=\cp(\beta_\ast(e))+\beta_\ast(e)e.
\label{eq:AizLieb}
\eeq
Moreover, if $\nu\in\cSI$ is such that $\ce_\Phi(\nu)=e$, then $\cs(e)=s(\nu)$ iff
$\nu\in\cSeq(\beta_\ast(e)\Phi)$.
\eenr
\eet
\proof \textbf{(i)} The tracial state on $\fA$ is the unique maximizer of the
specific entropy $s$, \cite[Theorem~IV.2.4]{Simon1993}. Thus setting $e_0=\tr(E(\Phi))$,
we have $\cs(e_0)=\max\{s(\nu)\mid\nu\in\cSI\}=\log\dim\cH$. The function $\beta_\ast=\cs'$
being monotone decreasing, we have $\beta_\ast(e)\ge0$ for $e\in[e_\mathrm{min},e_0]$,
and $\beta_\ast(e)\le0$ for $e\in[e_0,e_\mathrm{max}]$.

\medskip\noindent\textbf{(ii)}
Since $\hat\cs$ is convex, it is the Legendre--Fenchel transform of $\cp$, 
which translates into Relation~\eqref{eq:smalldualGibbs}.
A well known property of the Legendre--Fenchel 
transform~\cite[Theorem~23.5]{Rockafellar1970},\footnote{$(\partial f)(x)$ 
denotes the subdifferential of $f$ at $x$, \ie the interval 
$[(\partial^-f)(x),(\partial^+f)(x)]$.}
$$
-e\in(\partial\cp)(\beta)
\quad\Leftrightarrow\quad \cp(\beta)-\cs(e)=-\beta e
\quad\Leftrightarrow\quad\beta=\beta_\ast(e),
$$
yields Relation~\eqref{eq:AizLieb}. Finally, it follows from definition~\eqref{eq:csDef}
that $s(\nu)\le\cs(e)$ for any $\nu\in\cSI$ such that $\ce_\Phi(\nu)=e$,
and Relation~\eqref{eq:AizLieb} yields that
$$
s(\nu)=\cs(e)=\cp(\beta_\ast(e))+\beta_\ast(e)e=P(\beta_\ast(e)\Phi)+\beta_\ast(e)\ce_\Phi(\nu)
$$
iff $\nu\in\cSeq(\beta_\ast(e)\Phi)$.
\qed

Going back to a pair $(\omega,\Phi)\in\cSI\times\cBsd$, let 
$e_\ast\coloneq\ce_\Phi(\omega)\in(e_\mathrm{min},e_\mathrm{max})$. Then 
$\beta_\eq=\beta_\eq(\omega,\Phi)\coloneq\beta_\ast(e_\ast)$ 
is our equilibrium inverse temperature, and
$$
\cs(e_\ast)=\inf_{\beta\in\RR}\left(\cp(\beta)+\beta e_\ast\right)=\cp(\beta_\eq)+\beta_\eq e_\ast.
$$
In the following, we set 
\[
\cS_\eq(\omega,\Phi)\coloneq\{\nu\in\cSeq(\beta_\eq(\omega,\Phi)\Phi)\mid\ce_\Phi(\nu)=\ce_\Phi(\omega)\}.
\]
\noindent\textbf{Remarks. (1)} A physical requirement is that $\beta_\eq\geq0$. This is tied to the
choice of the initial state and of the direction of time in the
Ces\`aro averages~\eqref{eq:DSSDef} defining the DSS. If
$\beta_\eq<0$, one may reverse the sign of $\Phi$, equivalently the
direction of time in~\eqref{eq:DSSDef}, so that the resulting equilibrium
inverse temperature is positive. See~\cite[Section~3.8]{Peierls1979}
for a discussion of this familiar issue.

\noindent\textbf{(2)} In~\cite{Jaksic2023a} we have set the equilibrium
temperature by rescaling $\Phi$, which allowed us to take $\beta_\eq=1$. This
equivalent approach was convenient for the purposes of~\cite{Jaksic2023a}.
However, in the present work it is more natural to fix $\Phi$ and set 
$\beta_\eq$ as above.

\noindent\textbf{(3)} If $e_\ast=e_0$, then $\beta_\eq=0$ and
$\cSeq(0\Phi)$ consists of the unique tracial state. Thus the thermal target
at $\beta_\eq=0$ is unambiguous. This case will reappear in the study of
long-range Ising-type models in the fourth paper in the series~\cite{Jaksic2026e} and in the lecture notes~\cite{Jaksic2026d}.

\noindent\textbf{(4)} If, in addition to the energy, one singles out finitely
many independent physical constants of the motion, the corresponding
thermodynamic parameters are obtained from the multivariable pressure by the
same Legendre--Fenchel procedure. The familiar chemical potential is the basic
example in fermionic systems. We shall not need this extension here, since the
role of the full family of constants of the motion will be handled directly by
the dynamical Gibbs variational principle.

 \subsection{Specific entropy bounds}

Once the equilibrium temperature is set, the following entropic bounds complement the 
first relation in~\eqref{eq:omegaplusprop}.
\bep\label{prop:st-pres}
Let $\omega_\eq\in\cSeq(\omega,\Phi)$. 
\benr 
\item $s(\omega)<s(\omega_\eq)$. 
\item 
For any $\omega_+ \in\Sigma_+(\omega, \Phi)$, $s(\omega_+)\leq s(\omega_\eq)$
with equality iff $\omega_+\in\cSeq(\omega,\Phi)$.
\eenr
\eep 
\proof \textbf{(i)} Since $\ce_\Phi(\omega)=\ce_\Phi(\omega_\eq)$, the dual variational 
principle~\eqref{eq:dual-var} gives
$$
s(\omega)=\inf_{\Psi\in\cBb}\left(P(\Psi)+\ce_\Psi(\omega)\right)\leq
P(\beta_\eq\Phi)+\beta_\eq\ce_{\Phi}(\omega_\eq)=s(\omega_\eq),
$$ 
with equality iff $\omega\in\cSeq(\beta_\eq\Phi)$. If $\beta_\eq\neq0$,
such a state is $\tau_\Phi$-invariant since it is invariant under
$\tau_{\beta_\eq\Phi}^t=\tau_\Phi^{\beta_\eq t}$. If $\beta_\eq=0$, it is the
tracial state and is again $\tau_\Phi$-invariant. Both cases contradict our
standing assumption.

\medskip\noindent\textbf{(ii)} For all $\nu\in\cSI$ such that 
$\ce_\Phi(\nu)=\ce_\Phi(\omega)\eqcolon e_\ast$ we have, by~\eqref{eq:smalldualGibbs}
and the variational definition of $\cs$,
$$
s(\omega_\eq)=P(\beta_\eq\Phi)+\beta_\eq\ce_{\Phi}(\omega)
=\cp(\beta_\eq)+\beta_\eq e_\ast=\cs(e_\ast)\ge s(\nu),
$$
with equality iff $\nu\in\cSeq(\omega,\Phi)$. By the second identity 
in~\eqref{eq:omegaplusprop}, this applies to $\nu=\omega_+$.
\qed

Energy conservation alone does not determine whether the upper entropic bound is
saturated. Additional constants of the motion may obstruct thermal equilibrium and,
more generally, constrain the thermodynamic description of a DSS. We therefore turn
to the structure of constants of the motion. The problem of approach to thermal
equilibrium is taken up again in Section~\ref{sec:ate-revisited} after the variational
theory has been developed.

\section{Constants of the motion}
\label{sec:CM}

\subsection{Conserved observables and invariant interactions}

For any $\Phi\in\cBsd$ and $A\in\fAr$, we set 
\[
\cT_\Phi(A)\coloneq\left\{t\in\RR\mid\nu\circ\tau_\Phi^t(A)=\nu(A)\text{ for all }\nu\in\cSI\right\}.
\]
By Proposition~\ref{prop:vanish_on_average}, $\cT_\Phi(A)$ is the set of all
$t\in\RR$ for which $\tau_\Phi^t(A)\approx A$. It is a closed subgroup of
$(\RR,+)$. In particular, if it contains an open neighborhood of $0$, then it
is equal to $\RR$.

\begin{definition}
An observable $A\in\fAr$ is called a \textsl{constant of the motion} of $\tau_\Phi$ if
\[
\nu\circ\tau_\Phi^t(A)=\nu(A)
\]
for all $\nu\in\cSI$ and $t\in\RR$, equivalently if $\cT_\Phi(A)=\RR$.
We denote by
\[
\fC(\Phi)
\coloneq
\left\{[A]\coloneq A+\fI+\RR\one\mid \cT_\Phi(A)=\RR\right\}
\subset \fAr/(\fI+\RR\one)
\]
the corresponding space of constants of the motion modulo physical
equivalence.
\end{definition}

A well known fact of mechanics, Noether's theorem, establishes a deep connection
between conservation laws and one-parameter groups of symmetries. Interactions, 
as generators of such symmetries, are therefore connected to constants 
of the motion. This motivates the next definition. 

\begin{definition}
An interaction $\Psi\in\cBb$ is said to be $\tau_\Phi$-\textsl{invariant} if its 
specific energy observable $E(\Psi)$ is a constant of the motion. We denote by
\[
\cC(\Phi)\coloneq
\left\{[\Psi]\coloneq\Psi+\cI+\cR\mid \Psi\in\cBb,[E(\Psi)]\in\fC(\Phi)\right\}
\subset\cBb/(\cI+\cR)
\]
the quotient space of physical equivalence classes of $\tau_\Phi$-invariant interactions.
\end{definition}

The machinery developed in Section~\ref{sec:TF} identifies the observable and
interaction descriptions of conservation laws.

\bep\label{prop:CMlevels}
The map
\[
\cC(\Phi)\ni[\Psi]\longmapsto [E(\Psi)]
\in\fC(\Phi)
\]
is an isometric isomorphism.
\eep
\proof By Proposition~\ref{prop:ru-R}(ii), $E(\Psi)\equiv R(\Psi)$ (mod $\fI$). 
The assertion is therefore the restriction of the second isometric
isomorphism of Theorem~\ref{thm:RuIso} to the subspaces singled out by the
conservation law.\vskip-8pt\qed

We shall see, in the next section, that under appropriate regularity conditions,
a $\tau_\Phi$-invariant interaction $\Psi$ indeed generates a symmetry of the
system, \ie a $C^\ast$-dynamics $\tau_\Psi$ commuting with $\tau_\Phi$.

From now on we identify $\cC(\Phi)$ and $\fC(\Phi)$ through Proposition~\ref{prop:CMlevels} 
and denote the resulting space simply by $\fC(\Phi)$. We will freely write either
$[A]\in\fC(\Phi)$ or $[\Psi]\in\fC(\Phi)$ according to whether an observable or an
interaction representative is being used.

Recall that 
\[
\cBR
\coloneq
\left\{\Psi\in\cBb\ \middle|\
(\nu,\Psi)\text{ is regular for every }\nu\in\cSeq(\Psi)\right\}.
\]
The \textsl{physical constants of the motion} are those elements of
$\fC(\Phi)$ which admit a representative in the thermodynamically and
dynamically natural class $\cBsd\cap\cBR$; explicitly,
\beq
\fCphys(\Phi)\coloneq
[\cBsd\cap\cBR]\cap\fC(\Phi).
\label{eq:Cphys-def}
\eeq

\subsection{Dynamical characterization}
\label{sec:CDC}

In this section we investigate the dynamical properties of $\tau_\Phi$-invariant
interactions. For $\Phi\in\cBsd$ and $X\in\cF$, we set 
$$
\tau_{\Phi,X}^t(A)=\e^{\i tH_X(\Phi)}A\e^{-\i tH_X(\Phi)}.
$$
Let $\Phi,\Psi\in\cBsd$. For any $t\in\RR$ we define the Hamiltonian 
$H(\Psi|\Phi,t)=\left(H_X(\Psi|\Phi,t)\right)_{X\in\cF}$ as 
\beq
H_X(\Psi|\Phi,t)=\tau_{\Phi,X}^{-t}(H_X(\Psi)).
\label{eq:ha-t}
\eeq
The van~Hove formulation in the next definition is used in the proof
of Theorem~\ref{thm:main2-const} under \textbf{(SD+)}, through the
application of~\cite[Proposition~14.1]{Araki2003} to the
transported Hamiltonians.

\preASS{SD}{SD}
\begin{definition}\label{SD}
We say that the pair $(\Phi,\Psi)\in\cBsd\times\cBsd$ has the \textbf{(SD)}\footnote{SD stands for surface-dynamics.}
property if the following hold for all $t$ in a neighborhood of $0$.
\benr
\item For every finite $\Lambda\in\cF$, the surface energy
\[
W_\Lambda(\Psi|\Phi,t)
=
\lim_{\widetilde\Lambda}
\big(H_{\widetilde\Lambda}(\Psi|\Phi,t)-H_\Lambda(\Psi|\Phi,t)
-H_{\widetilde\Lambda\setminus\Lambda}(\Psi|\Phi,t)\big)
\]
exists in norm, where $\widetilde\Lambda$ runs over the finite supersets of
$\Lambda$, directed by inclusion.
\item The surface energies vanish in the van Hove sense,
\[
\Lim\frac{1}{|\Lambda|}\|W_\Lambda(\Psi|\Phi,t)\|=0.
\]
\item One has, again in the van Hove sense,
\[
\Lim\frac1{|\Lambda|}\left\|\tau_\Phi^{-t}(H_\Lambda(\Psi))-H_\Lambda(\Psi|\Phi,t)\right\|=0.
\]
\eenr
\end{definition}
 
\bet Suppose that $(\Phi, \Psi)$ has the \preSD{} property and that for all $s,t\in\RR$, 
\beq
\tau_\Phi^t \circ \tau_\Psi^s=\tau_\Psi^s\circ \tau_\Phi^t.
\label{eq:DynCommut}
\eeq
Then $\Psi$ is $\tau_\Phi$-invariant, i.e., $[\Psi]\in\fC(\Phi)$.
\label{thm:main1-const}
\eet
\proof Since
\[
\e^{\i sH_\Lambda(\Psi|\Phi,t)}
=\e^{-\i tH_\Lambda(\Phi)}\e^{\i sH_\Lambda(\Psi)}\e^{\i tH_\Lambda(\Phi)},
\]
one has
\beq
\Llim\e^{\i sH_\Lambda(\Psi|\Phi,t)}A\e^{-\i sH_\Lambda(\Psi|\Phi,t)}
=\tau_\Phi^{-t}\circ \tau_\Psi^s\circ \tau_\Phi^{t}(A)
\label{eq:Borodin}
\eeq
for all $A\in\fA$, uniformly for $s$ and $t$ in compacts. Our assumption~\eqref{eq:DynCommut} 
yields that the right-hand side of the last identity is just $\tau_\Psi^s(A)$. 
Since $[H_{\Lambda^\prime\setminus\Lambda}(\Psi|\Phi,t),A]=0$ whenever $A\in\fA_\Lambda$, 
the \preSD{} property further gives that, for such $A$ and all $t$ in a neighborhood of $0$,
\[
\lim_{\Lambda^\prime}[ H_{\Lambda^\prime}(\Psi|\Phi,t),A]
=[H_\Lambda(\Psi|\Phi,t)+W_\Lambda(\Psi|\Phi,t),A].
\]
Hence, on the one hand
\begin{align*}
\tau_\Psi^s(A)
&=\lim_{\Lambda^\prime}\e^{\i sH_{\Lambda^\prime}(\Psi|\Phi,t)}A\e^{-\i sH_{\Lambda^\prime}(\Psi|\Phi,t)}\\[4pt]
&=A+\i\,\lim_{\Lambda^\prime}\int_0^s\e^{\i uH_{\Lambda^\prime}(\Psi|\Phi,t)}
[H_{\Lambda^\prime}(\Psi|\Phi,t),A]\e^{-\i uH_{\Lambda^\prime}(\Psi|\Phi,t)}\d u\\[4pt]
&=A+\i\int_0^s\tau_\Psi^u([H_\Lambda(\Psi|\Phi,t)+ W_\Lambda(\Psi|\Phi,t),A])\d u,
\end{align*}
and on the other hand
\[
\tau_\Psi^s(A)=A+\i\int_0^s\tau_\Psi^u([H_\Lambda(\Psi)+W_\Lambda(\Psi),A])\d u,
\]
which allows us to conclude that
\[
[H_\Lambda(\Psi|\Phi,t)+ W_\Lambda(\Psi|\Phi,t),A]=[H_\Lambda(\Psi)+W_\Lambda(\Psi),A].
\]
Taking the partial normalized trace over $\fA_{\Lambda^c}$ yields
\[
[ H_\Lambda(\Psi|\Phi,t)-H_\Lambda(\Psi)+\tr_{\Lambda^c}(W_\Lambda(\Psi|\Phi,t)-W_\Lambda(\Psi)),A]=0, 
\]
and so 
\beq
H_\Lambda(\Psi|\Phi,t)-H_\Lambda(\Psi)+\tr_{\Lambda^c}(W_\Lambda(\Psi|\Phi,t)-W_\Lambda(\Psi))=C_\Lambda(t)\one
\label{eq:sosot}
\eeq
for some $C_\Lambda(t)\in\RR$. Since $\tr(H_\Lambda(\Psi|\Phi,t)-H_\Lambda(\Psi))=0$, we deduce
\[
\tr(W_\Lambda(\Psi|\Phi,t)-W_\Lambda(\Psi))=C_\Lambda(t).
\]
Hence, by Proposition~\ref{prop:BsStuf}(ii) and the \preSD{} property,
\[
\Lim\frac{|C_\Lambda(t)|}{|\Lambda|}=0.
\]
Going back to~\eqref{eq:sosot} and using contractivity of the partial normalized
trace, we obtain, for any $\nu\in\cSI$ and $t$ in a neighborhood of $0$,
\[
\Lim\frac1{|\Lambda|}
\left|\nu(H_\Lambda(\Psi|\Phi,t)-H_\Lambda(\Psi))\right|=0.
\]
In particular, the corresponding centered-cube limit vanishes. Combining this
with Part~\textbf{(iii)} of the \preSD{} property and
Theorem~\ref{thm:Gibbs}(i) gives
\[
\nu\circ\tau_\Phi^{-t}(E(\Psi))
=\Llim\frac1{|\Lambda|}\nu\circ\tau_\Phi^{-t}(H_\Lambda(\Psi))
=\Llim\frac1{|\Lambda|}\nu(H_\Lambda(\Psi|\Phi,t))
=\nu(E(\Psi)).
\]
Thus a neighborhood of $0$ is contained in $\cT_\Phi(E(\Psi))$ and it follows that
$\cT_\Phi(E(\Psi))=\RR$. \qed

The last result has a partial converse. Denote by $\Psi_t=\left(\Psi_t(X)\right)_{X\in\cF}$
the interaction canonically associated\footnote{Recall Relation~\eqref{eq:incexc}.} to the Hamiltonian $H(\Psi|\Phi,t)$, and note that
Relation~\eqref{eq:Borodin} can be written as 
\beq
\tau_{\Psi_t}^s=\tau_\Phi^{-t}\circ \tau_\Psi^s\circ \tau_\Phi^{t}.
\label{eq:Borodine}
\eeq
\bep\label{prop:hc}
Suppose that $(\Phi,\Psi)$ has the \preSD{} property, that $[\Psi]\in\fC(\Phi)$, and that $\Psi_t\in\cBsd$ for $t$ in
a neighborhood of\, $0$. Then for all $s,t\in\RR$, 
\beq
\tau_\Phi^t\circ\tau_\Psi^s=\tau_\Psi^s\circ\tau_\Phi^t.
\label{eq:DynCommute}
\eeq
\eep
\proof By the group property of the dynamics, it suffices to show that~\eqref{eq:DynCommut}
holds for $t$ and $s$ in a neighborhood of $0$. Since $\Psi_t\in\cBsd$ for small
enough $t$, it follows from Theorem~\ref{thm:Gibbs}(i), the \preSD{} property and
$E(\Psi)$ being a constant of the motion for $\tau_\Phi$ that,
for small enough $t$ and any $\nu\in\cSI$,
$$
\nu(E(\Psi_t))=\Llim\frac1{|\Lambda|}\nu(H_\Lambda(\Psi|\Phi,t))
=\nu\circ\tau_\Phi^{-t}(E(\Psi))=\nu(E(\Psi)).
$$
Proposition~\ref{prop:two-rpe}(v) thus gives $E(\Psi_t)\approx E(\Psi)$.
It then follows from Theorem~\ref{thm:phys-eq} that 
$\tau_{\Psi_t}=\tau_\Psi$, and Relation~\eqref{eq:Borodine} yields the statement.
\qed

Unfortunately, we do not know how to verify the condition
$\Psi_t\in\cBsd$ in Proposition~\ref{prop:hc} in non-trivial examples.
 As shown in the next theorem, conservation of
the specific entropy suffices to obtain the converse to
Theorem~\ref{thm:main1-const}. We also give an alternative proof based on
the thermodynamic formalism of the transported Hamiltonians, without
assuming conservation of the specific entropy. This proof requires a strengthening of the \preSD{} property. The
additional requirement, that $\fAloc$ remains a core for the transported
generator, is also of independent structural interest. We formulate
this condition as follows.

First, we observe that $\alpha^s=\tau_\Phi^{-t}\circ\tau_\Psi^s\circ\tau_\Phi^{t}$ defines
a $C^\ast$-dynamics on $\fA$, and we denote by $\delta_t$ its generator. Clearly
$\Dom(\delta_t)=\tau_\Phi^{-t}(\Dom(\delta_\Psi))$, and 
$\delta_t=\tau_\Phi^{-t}\circ\delta_\Psi\circ\tau_\Phi^{t}$. If the pair $(\Phi,\Psi)$ 
has the \preSD{} property then, for $t$ in a neighborhood of $0$, it follows from~\eqref{eq:Borodine}
that
\[
\delta_t(A)
=\lim_{\Lambda'}\sum_{\twol{X\subset\Lambda'}{X\cap\Lambda\neq\emptyset}}\i[\Psi_t(X),A]
=\i [H_\Lambda(\Psi|\Phi,t)+W_\Lambda(\Psi|\Phi,t),A],
\]
for $A\in\fA_\Lambda$. 
\preASS{SDplus}{SD+}
\begin{definition}\label{SDplus}
We say that a pair $(\Phi,\Psi)$ has the \textbf{(SD+)} property if it has the \preSD{} property and, in addition,
for $t$ in a neighborhood of $0$, $\fAloc$ is a core for $\delta_t$. 
\end{definition}

By conjugation, $\tau_\Phi^{-t}(\fAloc)$ is already a core for $\delta_t$.
The additional requirement is that $\fAloc$ itself be a core for $\delta_t$.

\bet\label{thm:main2-const}
Let $\Phi,\Psi\in\cBsd$ and suppose that $[\Psi]\in\fC(\Phi)$. If either
$(\Phi,\Psi)$ has the \preSDplus{} property or $\tau_\Phi$ preserves the
specific entropy, then for all $s,t\in\RR$, 
$$
\tau_\Phi^t \circ \tau_\Psi^s=\tau_\Psi^s\circ \tau_\Phi^t.
$$
\eet
\proof We first assume that $(\Phi,\Psi)$ has the \preSDplus{} property.
By the group property, it suffices to show that $\alpha=\tau_\Psi$ for small
enough $t$. Although it may happen that $\Psi_t\not\in\cBb$,
the pressure and the specific energy associated with the local Hamiltonian family
$H(\Psi|\Phi,t)$ are well-defined. In the formulas below we denote these quantities,
with a slight abuse of notation, by $P(\Psi_t)$ and $\ce_{\Psi_t}$. This is precisely
the local-Hamiltonian formalism used in the companion paper~\cite{Jaksic2026b}. It is immediate from the
definitions~\eqref{eq:PDef} and~\eqref{eq:ha-t} that
\beq 
P(\Psi_t)=P(\Psi),
\label{eq:pres-t}
\eeq 
while Part~\textbf{(iii)} of the \preSD{} property and $E(\Psi)$ being a constant of the motion for $\tau_\Phi$ 
ensure that for any $\omega\in\cSI$ 
\beq
\ce_{\Psi_t}(\omega)=
\Llim\frac1{|\Lambda|}\omega(H_\Lambda(\Psi|\Phi,t))=\Llim\frac1{|\Lambda|}\omega\circ\tau_\Phi^{-t}(H_\Lambda(\Psi))
=\omega\circ\tau_\Phi^{-t}(E(\Psi))=\omega(E(\Psi))=\ce_{\Psi}(\omega).
\label{eq:energy-t} 
\eeq
We now appeal to Araki--Moriya's thermodynamic formalism. We shall use the terminology and
notation of~\cite{Araki2003} and refer the reader to this article for more details.
It follows from the \preSD{} property that $\Psi_t$ is a translation covariant general potential, 
as defined in~\cite[Section~5.5 and~8.1]{Araki2003}. More precisely, Conditions $(\Phi-\mathrm{a})$ 
and $(\Phi-\mathrm{b})$, \ie $\Psi_t(X)^\ast=\Psi_t(X)\in\fA_X$ for $X\in\cF$, with 
$\Psi_t(\emptyset)=0$, are obviously satisfied. From the formula
$$
\Psi_t(X)=\sum_{Y\subset X}(-1)^{|X|-|Y|}H_Y(\Psi|\Phi,t)
$$
one easily infers Condition $(\Phi-\mathrm{f})$, translation covariance.
Condition $(\Phi-\mathrm{c})$, which expresses evenness in the fermionic
setting, is omitted for quantum spin systems. Finally, since for
$\Lambda\subset\Lambda'\in\cF$,
$$
\sum_{\twol{X\subset\Lambda'}{X\cap\Lambda\neq\emptyset}}\Psi_t(X)
=H_{\Lambda'}(\Psi|\Phi,t)-H_{\Lambda'\setminus\Lambda}(\Psi|\Phi,t),
$$
it follows from Part~\textbf{(i)} of the \preSD{} property that
$$
\lim_{\Lambda'}\sum_{\twol{X\subset\Lambda'}{X\cap\Lambda\neq\emptyset}}\Psi_t(X)
=H_\Lambda(\Psi|\Phi,t)+W_\Lambda(\Psi|\Phi,t),
$$
which yields Condition $(\Phi-\mathrm{e})$. Part~(ii) of the \preSD{} property
provides the surface-energy assumption of~\cite[Proposition~14.1]{Araki2003}.
Moreover, invariance of the trace under finite-volume unitary conjugation and
finite-range approximation in $\cBb$ give
$$
c=\Lim\frac1{|\Lambda|}\tr(H_\Lambda(\Psi|\Phi,t))
=\Lim\frac1{|\Lambda|}\tr(H_\Lambda(\Psi))=\tr(E(\Psi)).
$$
By Part~(ii) of the \preSD{} property, the same trace limit is obtained with
$H_\Lambda(\Psi|\Phi,t)+W_\Lambda(\Psi|\Phi,t)$ in place of
$H_\Lambda(\Psi|\Phi,t)$. Thus, \cite[Proposition~14.1]{Araki2003} applies:
there exists a translation covariant standard potential, \cite[Definition~5.10 and Section~8.1]{Araki2003} 
$\widetilde\Psi_t$ such that, for any $\omega\in\cSI$,
\beq
P(\widetilde\Psi_t)=P(\Psi_t)+c,\qquad \ce_{\widetilde\Psi_t}(\omega)=\ce_{\Psi_t}(\omega)-c.
\label{eq:AMchanges}
\eeq
By~\cite[Section~11 and Proposition~14.1]{Araki2003}, Relations~\eqref{eq:pres-t}, \eqref{eq:energy-t} 
and~\eqref{eq:AMchanges} give that $\widetilde\Psi_t$, $\Psi_t$ and $\Psi$ share the same equilibrium states.
The \preSDplus{} property ensures that for $t$ in a neighborhood of $0$ the $C^\ast$-dynamics $\alpha=\tau_\Phi^{-t}\circ\tau_\Psi\circ\tau_\Phi^{t}$
satisfies properties~(II), (III) and~(IV) of~\cite[Section~1]{Araki2003}. It then follows 
from~\cite[Theorems~A and~B]{Araki2003} that $\tau_\Psi$ and $\alpha$ share the same KMS states. 
Hence, uniqueness of the modular group, Theorem~\ref{thm:israel}~\textbf{(iii)}, yields that
$\alpha=\tau_\Psi$ for small enough $t$.

Suppose now that $\tau_\Phi$ preserves the specific entropy. Choose
$\eta\in\cSeq(\Psi)$.  Conservation of the specific entropy and the assumption
$[\Psi]\in\fC(\Phi)$ give, for all $t\in\RR$,
\[
s(\eta_t)-\eta_t(E(\Psi))
=s(\eta)-\eta(E(\Psi))=P(\Psi).
\]
Thus $\eta_t\in\cSeq(\Psi)$, and Theorem~\ref{thm:KMSeq} implies that
$\eta_t$ is a $1$-KMS state for $\tau_\Psi$. By conjugation, $\eta_t$ is also a $1$-KMS state for
$\tau_\Phi^{-t}\circ\tau_\Psi\circ\tau_\Phi^t$. It follows from Theorem~\ref{thm:israel}~\textbf{(iii)} that
\[
\tau_\Phi^{-t}\circ\tau_\Psi^s\circ\tau_\Phi^t=\tau_\Psi^s
\]
for all $s,t\in\RR$. \qed

We formulate the following conjectures.

\conjecture{Conjecture-(SD)}{conjSD}{Any pair $(\Phi, \Psi)\in\cBsd\times\cBsd$ 
has the \preSD{} property.}

\conjecture{Conjecture-(SD+)}{conjSDplus}{Any pair $(\Phi, \Psi)\in\cBsd\times\cBsd$
has the \preSDplus{} property.}

The strongest result currently available is established in~\cite{Jaksic2026b}.
\bet\label{thm:jpst}
Suppose that $\Phi\in\cBdiam{r}$ for some $r>2d$ and that $\Psi\in\cBsd$.
Then the following hold.
\benr
\item For every $t\in\RR$, $\fAloc$ is a core for the generator of
$\tau_\Phi^{-t}\circ\tau_\Psi\circ\tau_\Phi^t$.
\item If, in addition, $\Psi\in\cBdiam{s}$ for some $s>d$, then the pair
$(\Phi,\Psi)$ has the \preSDplus{} property. In this case, the properties
entering its definition hold for every $t\in\RR$.
\eenr 
\eet
The surface-energy estimates and the core property are established in the
companion paper using Lieb--Robinson bounds~\cite{Nachtergaele2019}. A somewhat
more general surface-energy statement is proved there; here we use only the
form stated above. 

Finally, we observe the following symmetry relation.

\begin{proposition}
Suppose that $\Phi,\Psi\in\cBsd$. Then
\beq
[\Psi]\in\fC(\Phi)\quad\Leftrightarrow\quad [\Phi]\in\fC(\Psi).
\label{eq:t-nice}
\eeq
\end{proposition}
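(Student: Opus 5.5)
The approach is to reduce both sides of \eqref{eq:t-nice} to the vanishing of a single ``energy current'' functional, and to show that this functional is antisymmetric in $(\Phi,\Psi)$. No entropy conservation and no \preSD{} property are available here, so Theorems~\ref{thm:main1-const} and~\ref{thm:main2-const}, and hence commutation of the dynamics, cannot be used. I will instead work at the level of the generators $\delta_\Phi$ and $\delta_\Psi$ acting on the local Hamiltonians.

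\emph{Step 1 (current observable).} For $\Phi,\Psi\in\cBs$ set
\[
D(\Phi,\Psi)\coloneq\sum_{Y\ni0}\frac{1}{|Y|}\sum_{X\cap Y\neq\emptyset}\i[\Phi(X),\Psi(Y)]\in\fAr .
\]
The series converges in norm, since
$\|\sum_{X\cap Y\neq\emptyset}[\Phi(X),\Psi(Y)]\|\le2|Y|\,\|\Phi\|_\mathrm{s}\|\Psi(Y)\|$.
The same estimate, summed over $Y\subset\Lambda$, gives $\|\delta_\Phi(H_\Lambda(\Psi))\|\le 2|\Lambda|\,\|\Phi\|_\mathrm{s}\|\Psi\|_\mathrm{s}$. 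Repeating the argument of Corollary~\ref{cor:Abel} with finite-range approximations, and using the uniform bound above, I then expect
\[
\Llim\frac1{|\Lambda|}\nu\bigl(\delta_\Phi(H_\Lambda(\Psi))\bigr)=\nu(D(\Phi,\Psi))
\]
for every $\nu\in\cSI$.

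\emph{Step 2 (characterization).} Since $H_\Lambda(\Psi)\in\fAloc\subset\Dom(\delta_\Phi)$, we have
\[
\nu_t(H_\Lambda(\Psi))-\nu(H_\Lambda(\Psi))=\int_0^t\nu_u(\delta_\Phi(H_\Lambda(\Psi)))\,\d u .
\]
Dividing by $|\Lambda|$, applying Theorem~\ref{thm:Gibbs}(i) on the left, and using dominated convergence (justified by the bound in Step~1, with $\nu_u\in\cSI$), this becomes
\[
\ce_\Psi(\nu_t)-\ce_\Psi(\nu)=\int_0^t\nu_u(D(\Phi,\Psi))\,\d u .
\]
The integrand is continuous in $u$. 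Hence $[\Psi]\in\fC(\Phi)$ if and only if $\nu(D(\Phi,\Psi))=0$ for all $\nu\in\cSI$. For the ``only if'' direction, differentiate at $u=0$ and use that $\nu$ ranges over all of $\cSI$. By Proposition~\ref{prop:vanish_on_average}, the condition is equivalent to $D(\Phi,\Psi)\in\fI$.

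\emph{Step 3 (antisymmetry).} It remains to show $D(\Phi,\Psi)+D(\Psi,\Phi)\in\fI$. Then $D(\Phi,\Psi)\in\fI\Leftrightarrow D(\Psi,\Phi)\in\fI$, and by Step~2 applied with the roles of $\Phi$ and $\Psi$ exchanged, this is exactly \eqref{eq:t-nice}. For $\nu\in\cSI$, Step~1 gives
\[
\nu\bigl(D(\Phi,\Psi)+D(\Psi,\Phi)\bigr)=\Llim\frac1{|\Lambda|}\nu\bigl(\i[U_\Lambda(\Phi),H_\Lambda(\Psi)]+\i[U_\Lambda(\Psi),H_\Lambda(\Phi)]\bigr).
\]
Since $[H_\Lambda(\Phi),H_\Lambda(\Psi)]+[H_\Lambda(\Psi),H_\Lambda(\Phi)]=0$, this reduces to the boundary terms
\[
\frac1{|\Lambda|}\nu\bigl(\i[W_\Lambda(\Phi),H_\Lambda(\Psi)]+\i[W_\Lambda(\Psi),H_\Lambda(\Phi)]\bigr).
\]

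This last limit is the main obstacle. A naive norm bound on $[W_\Lambda(\Phi),H_\Lambda(\Psi)]$ does not give $o(|\Lambda|)$ under only the $\|\argdot\|_\mathrm{s}$ control. My first attempt is a density argument. Both sides of the desired identity are continuous bilinear in $(\Phi,\Psi)\in\cBs\times\cBs$ by the bound of Step~1. For finite-range $\Phi,\Psi$, the boundary commutators involve only $O(|\partial\Lambda|)$ terms and so vanish after division by $|\Lambda|$. Passing to the limit then gives $\nu(D(\Phi,\Psi)+D(\Psi,\Phi))=0$ on all of $\cBs$. This is the same mechanism as in the proof of Theorem~\ref{thm:energyp}; equivalently, it is the polarization of the identity $\nu(D(\Theta,\Theta))=0$ established there, applied at the level of $D$ with $\Theta=\Phi+\Psi$, which is legitimate because $D$ is defined on $\cBs$ without requiring $\Theta\in\cBsd$. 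If the bilinear continuity in Step~1 turns out to need more than $\|\argdot\|_\mathrm{s}$, I would fall back on a direct translation-averaging of the boundary terms. That averaging rewrites them as $\nu$ of Cesàro means of elements of $\fI$, which vanish by Proposition~\ref{prop:vanish_on_average}.
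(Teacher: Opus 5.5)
Your proof is correct, but it follows a different route from the paper's. Your $D(\Phi,\Psi)$ is exactly the paper's $\delta_\Phi(E(\Psi))$. The paper works with it directly in infinite volume:
\begin{itemize}
\item it shows $E(\Psi)\in\Dom(\delta_\Phi)$ by closedness of $\delta_\Phi$, since the local partial sums of $E(\Psi)$ and their images under $\delta_\Phi$ both converge in norm;
\item it obtains the exact identity $\nu(\delta_\Phi(E(\Psi)))=-\nu(\delta_\Psi(E(\Phi)))$ for $\nu\in\cSI$ by re-indexing the absolutely convergent double commutator series under translations;
\item it concludes by differentiating $\nu\circ\tau_\Phi^t(E(\Psi))$ at $t=0$ and then using $\frac{\d}{\d t}\nu\circ\tau_\Psi^t(E(\Phi))=(\nu\circ\tau_\Psi^t)(\delta_\Psi(E(\Phi)))$.
\end{itemize}

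You avoid the domain question altogether. You work with the local $H_\Lambda(\Psi)$ and pass to the thermodynamic limit (Theorem~\ref{thm:Gibbs}(i) plus dominated convergence). This gives the integrated current identity and the two-sided criterion $[\Psi]\in\fC(\Phi)\Leftrightarrow D(\Phi,\Psi)\in\fI$. You get antisymmetry from the bulk cancellation $[H_\Lambda(\Phi),H_\Lambda(\Psi)]+[H_\Lambda(\Psi),H_\Lambda(\Phi)]=0$. You control the boundary commutators only for finite-range pairs and extend by density of $\cBf$ in $\cBs$, using $\|D(\Phi,\Psi)\|\le2\|\Phi\|_\mathrm{s}\|\Psi\|_\mathrm{s}$.

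The paper's argument is shorter and exact, with no approximation or volume limit. Yours makes the mechanism visible, namely finite-volume antisymmetry of energy currents up to negligible surface terms, and it yields an explicit invariance criterion.

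Your hedges are unnecessary. The bilinear bound needed for the density step follows from your own Step~1 estimate. The Step~1 limit even holds in norm, since $\frac1{|\Lambda|}\|\delta_\Phi(H_\Lambda(\Psi))-\sum_{x\in\Lambda}\varphi^x(D(\Phi,\Psi))\|\le\frac{2\|\Phi\|_\mathrm{s}}{|\Lambda|}\sum_{x\in\Lambda}\sum_{Y\ni x,\,Y\cap\Lambda^c\neq\emptyset}\|\Psi(Y)\|\to0$, exactly as in Proposition~\ref{prop:BsStuf}(ii). The ``direct translation-averaging'' you list as a fallback is essentially the paper's proof.
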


\proof We use a bilinear version of the argument in the proof
of~\cite[Theorem~2.8(2)]{Jaksic2023a}.

We first note that
\[
\sum_{X\ni0}\frac1{|X|}
\sum_{Y:\,Y\cap X\neq\emptyset}
\|[\Phi(Y),\Psi(X)]\|
\leq 2\|\Phi\|_{\mathrm s}\|\Psi\|_{\mathrm s},
\]
since $\sum_{Y:\,Y\cap X\neq\emptyset}\|\Phi(Y)\|
\leq |X|\|\Phi\|_{\mathrm s}$. The finite partial sums of
$E(\Psi)=\sum_{X\ni0}\Psi(X)/|X|$ belong to $\fAloc$ and converge
in norm. The estimate shows that their images under $\delta_\Phi$
also converge in norm. By closedness of $\delta_\Phi$, it follows that
$E(\Psi)\in\Dom(\delta_\Phi)$ and that $\delta_\Phi(E(\Psi))$ is given
by the corresponding absolutely convergent commutator series.
Similarly, $E(\Phi)\in\Dom(\delta_\Psi)$.

Fix $\nu\in\cSI$. Absolute convergence and translation invariance give
\begin{align*}
\nu(\delta_\Phi(E(\Psi)))
&=\sum_{X\ni0}\sum_{Y:\,Y\cap X\neq\emptyset}
\frac1{|X|\,|Y|}\sum_{y\in Y}
\nu\bigl(\i[\Phi(Y-y),\Psi(X-y)]\bigr)\\
&=\sum_{Y\ni0}\frac1{|Y|}
\sum_{X:\,X\cap Y\neq\emptyset}
\nu\bigl(\i[\Phi(Y),\Psi(X)]\bigr)\\
&=-\nu(\delta_\Psi(E(\Phi))).
\end{align*}
For the second equality, set $X'=X-y$ and $Y'=Y-y$. Then
$0\in Y'$ and $-y\in X'$. For each fixed pair $(X',Y')$, there
are exactly $|X'|$ possible values of $y$, cancelling the factor
$1/|X'|$.

Suppose now that $[\Psi]\in\fC(\Phi)$. Differentiating the conservation
identity at $t=0$ gives $\nu(\delta_\Phi(E(\Psi)))=0$ for every
$\nu\in\cSI$. The preceding identity therefore yields
$\nu(\delta_\Psi(E(\Phi)))=0$ for every $\nu\in\cSI$.
Since $\nu\circ\tau_\Psi^t\in\cSI$, we obtain
\[
\frac{\d}{\d t}\nu\circ\tau_\Psi^t(E(\Phi))
=(\nu\circ\tau_\Psi^t)(\delta_\Psi(E(\Phi)))=0.
\]
Thus $[\Phi]\in\fC(\Psi)$. The reverse implication follows by
interchanging $\Phi$ and $\Psi$. \qed

\subsection{Specific relative entropy and constants of the motion}
\label{sec:SE}

\bep\label{prop:wd-pres}
Suppose that $\tau_\Phi$ preserves the specific entropy, and let
$(\rho,\Psi)\in\cSI\times\cBb$ be a regular pair. Then the following
statements are equivalent.
\benr
\item $s(\nu \circ \tau_\Phi^t|\rho)=s(\nu|\rho)$ for all $\nu\in\cSI$
and $t\in\RR$.
\item $[\Psi]\in\fC(\Phi)$.
\eenr
\eep
\proof
The pair $(\rho,\Psi)$ being regular, one has
$s(\nu|\rho)=-s(\nu)+\ce_\Psi(\nu)+P(\Psi)$ for all $\nu\in\cSI$.
Since $\tau_\Phi$ preserves the specific entropy, we further have
\begin{align*}
s(\nu|\rho)
&=-s(\nu\circ\tau_\Phi^t)+\ce_\Psi(\nu\circ\tau_\Phi^t)+P(\Psi)
+\nu\left(E(\Psi)-\tau_\Phi^t(E(\Psi))\right)
\\[4pt]
&=s(\nu\circ\tau_\Phi^t|\rho)
+\nu\left(E(\Psi)-\tau_\Phi^t(E(\Psi))\right).
\end{align*}
Thus, \textbf{(i)} holds iff $E(\Psi)\approx\tau_\Phi^t(E(\Psi))$
for all $t\in\RR$. By the definition of $\fC(\Phi)$, this is precisely
\textbf{(ii)}. \qed

The interest of Proposition~\ref{prop:wd-pres} stems from Theorems~\ref{thm:pres-law} 
and~\ref{thm:pres-law-1}, see Theorem~\ref{thm:main-cm-1} below. 

\conjecture{Conjecture-(R+SE)}{conjRplusSE}{For any $\Phi,\Psi\in\cBsd$ and
$\rho\in\cSeq(\Psi)$, the pair $(\rho,\Psi)$ is regular and
\beq
s(\nu\circ\tau_\Phi^t|\rho\circ\tau_\Phi^t)=s(\nu|\rho),
\label{eq:RplusSE}
\eeq
for all $\nu\in\cSI$ and $t\in\RR$.}

Thus \conjRplusSE{} contains \conjR{} and supplements it with a 
finite-time conservation law for the specific relative entropy along pairs 
of state trajectories.

The strongest result currently available is established in the companion
paper~\cite{Jaksic2026b}.
\bet\label{thm:main-cm-1}
Suppose that $\Phi\in\cBf$ and let $\rho\in\cSeq(\Psi)$.
\benr
\item If $\|\Psi\|_r<r$ for some $r>0$, then $(\rho,\Psi)$ is a regular pair
and there exists $T>0$ such that
\[
s(\nu \circ \tau_\Phi^t|\rho\circ\tau_\Phi^t)=s(\nu|\rho)
\]
for all $\nu\in\cSI$ and $|t|<T$. If, in addition, $\rho$ is
$\tau_\Phi$-invariant, one can take $T=\infty$.
\item If $d=1$ and $\Psi\in\cBf$, then $(\rho,\Psi)$ is a regular pair and
\[
s(\nu \circ \tau_\Phi^t|\rho\circ\tau_\Phi^t)=s(\nu|\rho)
\]
for all $\nu\in\cSI$ and $t\in\RR$.
\eenr
\eet
\proof In Part~\textbf{(i)}, Theorem~\ref{thm:first-s}~\textbf{(iv)} gives
$\rho\in\cSwg(\Psi)$, and hence $(\rho,\Psi)$ is regular. The assertion is
then Theorem~\ref{thm:pres-law}. Part~\textbf{(ii)} follows from
Theorems~\ref{thm:pre-ara} and~\ref{thm:pres-law-1}. \qed

\subsection{Dynamical steady states and constants of the motion}
\label{sec:ess-cm}

\bet\label{thm:main-cm}
Let $\omega_+\in\Sigma_+(\omega,\Phi)\cap\cSeq(\Psi_+)$ for some $\Psi_+\in\cBsd$.
If the pair $(\Phi,\Psi_+)$ has the \preSD{} property, then $[\Psi_+]\in\fC(\Phi)$.
\eet
\proof By Theorem~\ref{thm:KMSeq} , $\omega_+\in\cSKMS(\Psi_+)$. 
Since $\omega_+$ is $\tau_\Phi$-invariant, it is 
also, for any fixed $s$, a $1$-KMS state for the $C^\ast$-dynamics 
$\tau_\Phi^s\circ\tau_{\Psi_+}\circ\tau_\Phi^{-s}$. It follows from 
Theorem~\ref{thm:israel}~\textbf{(iii)} that
\[
\tau_\Phi^s\circ\tau_{\Psi_+}^t\circ\tau_\Phi^{-s}=\tau_{\Psi_+}^t
\]
for all $s,t\in\RR$, and Theorem~\ref{thm:main1-const} yields the statement.\qed

By Theorem~\ref{thm:jpst}~\textbf{(ii)}, the assumption of Theorem~\ref{thm:main-cm} is satisfied 
if $(\Phi,\Psi_+)\in\cBdiam{r}\times\cBdiam{s}$ for some $r>2d$ and $s>d$. 
On the other hand, Proposition~\ref{prop:wd-pres} and Theorem~\ref{thm:main-cm-1} yield

\bet\label{thm-jpst2}
Suppose that $\Phi\in\cBf$ and let $\omega_+\in\Sigma_+(\omega,\Phi)\cap\cSeq(\Psi_+)$. 
\benr
\item If $\|\Psi_+\|_r<r$ for some $r>0$, then $[\Psi_+]\in\fC(\Phi)$.
\item The same conclusion holds if $d=1$ and $\Psi_+\in\cBf$. 
\eenr
\eet
\proof Since $\omega_+$ is $\tau_\Phi$-invariant,
Theorem~\ref{thm:main-cm-1}~\textbf{(i)}, applied with $\rho=\omega_+$, holds for all
$t\in\RR$. Proposition~\ref{prop:wd-pres} then implies that
$[\Psi_+]\in\fC(\Phi)$. Part~\textbf{(ii)} follows in the same way from
Theorem~\ref{thm:main-cm-1}~\textbf{(ii)}, applied with $\rho=\omega_+$. \qed

\subsection{Dynamical Gibbs variational principles}
\label{sec:varcs}

The role of constants of the motion becomes particularly transparent in a
variational characterization of dynamical steady states. We begin with a simple
interaction-level statement, which requires no regularity assumptions and will
be referred to as the potential form of the dynamical Gibbs variational
principle. The state formulation developed below is conceptually deeper and
constitutes the main form of the dynamical Gibbs variational principle used in
this work.

For $\Psi\in\cBb$ such that $[\Psi]\in\fC(\Phi)$, set
\[
s_\mathrm{pot}(\omega|\Psi)\coloneq-s(\omega)+\ce_\Psi(\omega)+P(\Psi),
\]
and
\[
\mathrm{ep_{pot}}(\omega)
\coloneq
\inf\left\{s_\mathrm{pot}(\omega|\Psi)\mid
\Psi\in\cBb,[\Psi]\in\fC(\Phi)\right\}.
\]
The quantity $s_\mathrm{pot}(\omega|\Psi)$ depends only on the
physical-equivalence class of $\Psi$.

\bep\label{prop:DGVP-pot}
For any $\omega_+\in\Sigma_+(\omega,\Phi)$,
\[
s(\omega_+)\le s(\omega)+{\rm ep}_{\rm pot}(\omega).
\]
If $\omega_+\in\cSeq(\Psi_+)$ for some $\Psi_+\in\cBb$ with
$[\Psi_+]\in\fC(\Phi)$, then equality holds and $\Psi_+$ is a minimizer.
Under this assumption, $\Psi\in\cBb$ with $[\Psi]\in\fC(\Phi)$ is a minimizer
if and only if $\omega_+\in\cSeq(\Psi)$. If, in addition,
$\Psi,\Psi_+\in\cBsd$, this is equivalent to $\Psi\sim\Psi_+$.
\eep
\proof
For any $[\Psi]\in\fC(\Phi)$, conservation of $E(\Psi)$ and the Gibbs variational principle give
\[
s_{\rm pot}(\omega|\Psi)
=-s(\omega)+\ce_\Psi(\omega_+)+P(\Psi)
\ge s(\omega_+)-s(\omega).
\]
Taking the infimum over $\Psi$ gives the first assertion. If
$\omega_+\in\cSeq(\Psi_+)$, the Gibbs variational principle is saturated
at $\Psi_+$, and hence equality holds. Under this assumption, a
$\tau_\Phi$-invariant interaction $\Psi$ is a minimizer if and only if the Gibbs variational principle
is saturated at $\omega_+$, i.e., if and only if
$\omega_+\in\cSeq(\Psi)$. The last assertion follows from
Theorem~\ref{thm:phys-eq}. \qed

The elementary Gibbs-variational estimate underlying the last proposition is
the same as the one used in the proof of Theorem~\ref{thm:jpt-o}; here the
additional conservation of $E(\Psi)$ turns it into a variational principle
over $\tau_\Phi$-invariant interactions.

We shall refer to Proposition~\ref{prop:DGVP-pot} as the potential form of the
dynamical Gibbs variational principle. It is useful because it requires no
regularity assumption. 

We now pass to the state formulation, which involves
specific relative entropy and gives the main dynamical Gibbs variational
principle of this work. Define
\[
\fC_\req(\Phi)\coloneq
\left\{[\Psi]\in\fC(\Phi)\,\middle|\, 
(\nu,\Psi)\text{ is regular for every }\nu\in\cSeq(\Psi)\right\}.
\]
Proposition~\ref{prop:two-rpe} shows that both
$\cSeq(\Psi)$ and the involved regularity property are independent of the choice of
the representative $\Psi$. Further set
\[
\cSreq(\Phi)
\coloneq
\bigcup_{[\Psi]\in\fC_\req(\Phi)}\cSeq(\Psi),
\]
and recall the convention $\inf\emptyset=\infty$.
\bep\label{prop:fr}
For any $\omega_+\in\Sigma_+(\omega,\Phi)$,
\[
\inf_{\nu\in\cSreq(\Phi)}s(\omega|\nu)
\ge s(\omega_+)-s(\omega).
\]
If $\omega_+\in\cSeq(\Psi_+)$ for some $\Psi_+$ such that
$[\Psi_+]\in\fC_\req(\Phi)$, then equality holds and every
$\nu\in\cSeq(\Psi_+)$ is a minimizer; in particular $\omega_+$ is a
minimizer. More generally, for $[\Psi]\in\fC_\req(\Phi)$ and
$\nu\in\cSeq(\Psi)$, the state $\nu$ attains the lower bound iff
$\omega_+\in\cSeq(\Psi)$.
\eep
\proof Let $[\Psi]\in\fC_\req(\Phi)$ and $\nu\in\cSeq(\Psi)$. The regularity of
$(\nu,\Psi)$ gives that
\[
s(\omega|\nu)=-s(\omega)+\ce_{\Psi}(\omega)+P(\Psi).
\]
Since $\Psi$ is $\tau_\Phi$-invariant, we further have
$\ce_{\Psi}(\omega)=\ce_{\Psi}(\bar\omega_T)$ for all $T>0$. Taking a
subsequence along which $\bar\omega_T\to\omega_+$ gives
\begin{align*}
s(\omega|\nu)
=-s(\omega)+\ce_{\Psi}(\bar\omega_T)+P(\Psi)
\to&-s(\omega)+\ce_{\Psi}(\omega_+)+P(\Psi)\nonumber\\
&=s(\omega_+|\nu)+s(\omega_+)-s(\omega)
\ge s(\omega_+)-s(\omega).
\end{align*}
For the fixed pair $(\nu,\Psi)$, equality holds iff
$s(\omega_+|\nu)=0$, equivalently iff
$\omega_+\in\cSeq(\Psi)$. The remaining assertions follow immediately.
\qed

Set
\[
\cS_{\req,+}(\Phi)\coloneq\cSreq(\Phi)\cap\Sigma_+(\omega,\Phi).
\]
\bep\label{prop:stein}
Assume that $\cS_{\req,+}(\Phi)\neq\emptyset$.
\benr
\item For any $\omega_+,\omega_+'\in\cS_{\req,+}(\Phi)$,
$$
s(\omega_+)=s(\omega_+'),\qquad s(\omega_+'|\omega_+)=0.
$$
\item Fix $\omega_+\in\cS_{\req,+}(\Phi)$ and $\Psi_+\in\cBb$ such that
$[\Psi_+]\in\fC(\Phi)$, $\omega_+\in\cSeq(\Psi_+)$, and
$(\nu,\Psi_+)$ is regular for every $\nu\in\cSeq(\Psi_+)$. Then
\[
\cS_{\req,+}(\Phi)\subset\cSeq(\Psi_+).
\]
\eenr
\eep
\proof For Part~\textbf{(i)}, choose a representative $\Psi_+$ of a class in
$\fC_\req(\Phi)$ such that $\omega_+\in\cSeq(\Psi_+)$. Proposition~\ref{prop:fr}
gives
$s(\omega_+)=s(\omega_+')$. Since $[\Psi_+]\in\fC(\Phi)$,
\[
\ce_{\Psi_+}(\omega_+')=\ce_{\Psi_+}(\omega)=\ce_{\Psi_+}(\omega_+).
\]
Using regularity of $(\omega_+,\Psi_+)$ we therefore obtain
\[
s(\omega_+'|\omega_+)
=-s(\omega_+')+\ce_{\Psi_+}(\omega_+')+P(\Psi_+)
=-s(\omega_+)+\ce_{\Psi_+}(\omega_+)+P(\Psi_+)=0.
\]
This proves Part~\textbf{(i)}. 

For Part~\textbf{(ii)}, let $\omega_+'\in\cS_{\req,+}(\Phi)$. By
Part~\textbf{(i)}, $s(\omega_+'|\omega_+)=0$. 
Since $(\omega_+,\Psi_+)$ is regular, this is equivalent to
$\omega_+'\in\cSeq(\Psi_+)$. 
\qed

Invoking Proposition~\ref{prop:two-rpe}, we define the physical sector of the 
state space as
\[
\cSphys(\Phi)\coloneq\bigcup_{[\Psi]\in\fCphys(\Phi)}\cSeq(\Psi),
\]
where $\fCphys(\Phi)$ was defined in~\eqref{eq:Cphys-def}.
The following is the state form of the dynamical Gibbs variational principle.
\bep\label{prop:DGVP-state}
Let $\omega_+\in\Sigma_+(\omega,\Phi)\cap\cSphys(\Phi)$ and choose
$\Psi_+\in\cBsd\cap\cBR$ such that
$[\Psi_+]\in\fC(\Phi)$ and $\omega_+\in\cSeq(\Psi_+)$. Then, the following hold.
\benr
\item
\[
\inf_{\nu\in\cSphys(\Phi)}s(\omega|\nu)
=s(\omega_+)-s(\omega).
\]
\item The set of minimizers is precisely $\cSeq(\Psi_+)$. In particular, the
minimizer is unique iff $\cSeq(\Psi_+)=\{\omega_+\}$.
\item If $\Sigma_+(\omega,\Phi)\subset\cSphys(\Phi)$ and
$\cSeq(\Psi_+)=\{\omega_+\}$, then
$\Sigma_+(\omega,\Phi)=\{\omega_+\}$.
\eenr
\eep

\proof Since $\cSphys(\Phi)\subset\cSreq(\Phi)$, Proposition~\ref{prop:fr}
gives
\[
s(\omega|\nu)\ge s(\omega_+)-s(\omega)
\]
for $\nu\in\cSphys(\Phi)$. Moreover, $[\Psi_+]\in\fC_\req(\Phi)$, and hence the same
proposition gives
\[
s(\omega|\omega_+)=s(\omega_+)-s(\omega).
\]
Since $\omega_+\in\cSphys(\Phi)$, this proves Part~\textbf{(i)}.

Let $\nu\in\cSphys(\Phi)$ be a minimizer and choose $\Psi\in\cBsd\cap\cBR$ such that $\nu\in\cSeq(\Psi)$.
Equality in Proposition~\ref{prop:fr} gives $\omega_+\in\cSeq(\Psi)$. Since
also $\omega_+\in\cSeq(\Psi_+)$, Theorem~\ref{thm:phys-eq} yields
$\Psi\sim\Psi_+$ and therefore $\nu\in\cSeq(\Psi_+)$. Conversely, every
$\nu\in\cSeq(\Psi_+)$ is a minimizer by Proposition~\ref{prop:fr}. This proves
Part~\textbf{(ii)}.

For Part~\textbf{(iii)}, let $\omega_+'\in\Sigma_+(\omega,\Phi)$. By assumption it belongs
to $\cSphys(\Phi)$, hence to $\cSreq(\Phi)$. Proposition~\ref{prop:stein}(ii)
then gives $\omega_+'\in\cSeq(\Psi_+)=\{\omega_+\}$. \qed

\section{Approach to equilibrium}
\label{sec:AE}

Throughout this section we assume that $\tau_\Phi$ preserves the specific
entropy.

\begin{definition}\label{def:ae}
The pair $(\omega,\Phi)$ has the property of \textsl{approach to equilibrium} if
\benr
\item the limit
\[
\omega_+=\wstarlim_{T\to\infty}\frac1T\int_0^T\omega\circ\tau_\Phi^t\d t
\]
exists;
\item $\omega_+\in\cSphys(\Phi)$.
\eenr
\end{definition}
Part~\textbf{(i)} is a model-dependent problem; our focus here is on the
structural problem posed by Part~\textbf{(ii)}. Whenever Part~\textbf{(ii)}
holds, Proposition~\ref{prop:DGVP-state} yields the dynamical Gibbs
variational principle.

It remains to understand when a DSS belongs to $\cSphys(\Phi)$. This is where
the structural conjectures enter.

\bep\label{prop:s-ha-2}
Suppose that $\omega_+\in\Sigma_+(\omega,\Phi)\cap\cSeq(\Psi_+)$ with
$\Psi_+\in\cBsd$. If either \conjR{} and \conjSD{} both hold, or
\conjRplusSE{} holds, then $[\Psi_+]\in\fC(\Phi)$ and
$\Psi_+\in\cBR$, and consequently $\omega_+\in\cSphys(\Phi)$.
\eep

\proof Under \conjR, $\Psi_+\in\cBR$. Under \conjSD, the
pair $(\Phi,\Psi_+)$ has the \preSD{} property and
Theorem~\ref{thm:main-cm} gives $[\Psi_+]\in\fC(\Phi)$. This proves the first
case.

If \conjRplusSE{} holds, then $\Psi_+\in\cBR$ and, since a DSS is
$\tau_\Phi$-invariant, Relation~\eqref{eq:RplusSE} gives, for all
$\nu\in\cSI$ and $t\in\RR$,
\[
s(\nu\circ\tau_\Phi^t|\omega_+)=s(\nu|\omega_+).
\]
Proposition~\ref{prop:wd-pres} yields $[\Psi_+]\in\fC(\Phi)$. \qed

We finish with an answer to the question of Ruelle. 
 
\bep Suppose that $\{\omega\}=\cSeq(\Psi_0)$ for some $\Psi_0\in\cBsd$.
If $\omega_+\in\Sigma_+(\omega,\Phi)\cap\cSphys(\Phi)$, then
$s(\omega_+)>s(\omega)$.
\label{prop-ruelle}
\eep
\proof Suppose that $s(\omega)=s(\omega_+)$ for some
$\omega_+\in\Sigma_+(\omega,\Phi)\cap\cSphys(\Phi)$. 
Since $\omega_+\in\cSeq(\Psi_+)$ with $[\Psi_+]\in\fC(\Phi)$, we have
$\omega(E(\Psi_+))=\omega_+(E(\Psi_+))$, and the relation 
\[
s(\omega_+)=\omega_+(E(\Psi_+))+ P(\Psi_+)
\]
gives 
\[
s(\omega)=\omega(E(\Psi_+))+P(\Psi_+).
\]
Hence $\omega\in\cSeq(\Psi_+)$. It follows that $\cSeq(\Psi_0)\cap\cSeq(\Psi_+)\neq\emptyset$, 
and so $\cSeq(\Psi_0)=\cSeq(\Psi_+)$ by Theorem~\ref{thm:phys-eq}. Hence $\omega=\omega_+$ 
which contradicts our basic hypothesis that $\omega$ should not be $\tau_\Phi$-invariant.
\qed
 
 We now give the formal definition of finite-range restoration of locality.

\begin{definition}\label{def:fr-rol}
The pair $(\omega,\Phi)$ has the property of \textsl{finite-range restoration
of locality} if every $\omega_+\in\Sigma_+(\omega,\Phi)$ belongs to
$\cSeq(\Psi_+)$ for some $\Psi_+\in\cBf$.
\end{definition}

In dimension one, the finite-range Gibbs theory leads to the following result.
\bet\label{thm:fr-rol-1d}
Suppose that $d=1$, that $\Phi,\Psi_0\in\cBf$, that
$\omega\in\cSeq(\Psi_0)$, and that $(\omega,\Phi)$ has the property of
finite-range restoration of locality. Then the following hold:
\benr
\item $\Sigma_+(\omega,\Phi)=\{\omega_+\}$.
\item for any $\Psi_+\in\cBf$ such that $\omega_+\in\cSeq(\Psi_+)$, one 
has $[\Psi_+]\in\fC(\Phi)$ and $\Psi_+\in\cBR$. 
\item the dynamical Gibbs variational principle holds and $\omega_+$ is its
unique minimizer.
\item $s(\omega_+)>s(\omega)$.
\eenr
\eet
\proof Let $\nu\in\Sigma_+(\omega,\Phi)$. By finite-range restoration of
locality, $\nu\in\cSeq(\Psi_\nu)$ for some $\Psi_\nu\in\cBf$. By
Theorem~\ref{thm:jpst}~\textbf{(ii)}, the pair $(\Phi,\Psi_\nu)$ has the \preSD{}
property, and it follows from Theorem~\ref{thm:main-cm} 
that $[\Psi_\nu]\in\fC(\Phi)$. By Theorem~\ref{thm:pre-ara}, $\cSeq(\Psi_\nu)$ 
is a singleton and its unique element is weak Gibbs, hence regularity holds. 
Thus $[\Psi_\nu]\in\fCphys(\Phi)$ and $\Sigma_+(\omega,\Phi)\subset\cSphys(\Phi)$.

Fix $\omega_+\in\Sigma_+(\omega,\Phi)$ and a corresponding
$\Psi_+\in\cBf$. Theorem~\ref{thm:pre-ara} gives
$\cSeq(\Psi_+)=\{\omega_+\}$, while Proposition~\ref{prop:DGVP-state}~(iii) yields
Part~\textbf{(i)}. Parts~\textbf{(ii)} and~\textbf{(iii)} follow from the preceding 
argument and Proposition~\ref{prop:DGVP-state}. Finally,
$\cSeq(\Psi_0)=\{\omega\}$ by Theorem~\ref{thm:pre-ara}, and Part~\textbf{(iv)} follows
from Proposition~\ref{prop-ruelle}. \qed

\newpage 

\section{Approach to thermal equilibrium}
\label{sec:ate}\label{sec:ate-revisited}

\begin{definition}\label{def:ate}
The pair $(\omega,\Phi)$ has the property of \textsl{approach to thermal
equilibrium} if
\[
\Sigma_+(\omega,\Phi)\subset\cSeq(\omega,\Phi).
\]
\end{definition}
Unlike Definition~\ref{def:ae}, this definition does not require uniqueness
of the DSS or relative-entropy regularity.

We first prove the finite-range criterion stated in the Introduction.

\bet\label{thm:fr-ate}
Suppose that $\Phi\in\cBf$ and that $(\omega,\Phi)$ has the property of
finite-range restoration of locality. If
\[
\fC_{\cBf}(\Phi)\coloneq\fC(\Phi)\cap[\cBf]=\operatorname{span}\{[\Phi]\},
\]
then $(\omega,\Phi)$ has the property of approach to thermal equilibrium.
\eet
\proof Let $\omega_+\in\Sigma_+(\omega,\Phi)$ and choose
$\Psi_+\in\cBf$ such that $\omega_+\in\cSeq(\Psi_+)$. By
Theorem~\ref{thm:jpst}~\textbf{(ii)} and Theorem~\ref{thm:main-cm}, 
$[\Psi_+]\in\fC(\Phi)$. Hence $[\Psi_+]=\lambda[\Phi]$ for some
$\lambda\in\RR$. Since $\ce_\Phi(\omega_+)=\ce_\Phi(\omega)$, the 
definition of the equilibrium inverse temperature gives $\lambda=\beta_\eq$. 
Thus $\omega_+\in\cSeq(\omega,\Phi)$. \qed

If additional constants of the motion are present, they impose further
constraints on thermalization. Indeed, for any $\nu\in\Sigma_+(\omega,\Phi)$,
$$
\omega(C)=\nu(C)
$$
for all $C\in\fC(\Phi)$.
\bet\label{thm:main-ate}
Suppose that $\beta_\eq \Phi\in\cBR$ and that there exists
$\omega_\eq\in\cSeq(\omega,\Phi)$ such that
\[
\omega(C)=\omega_\eq(C)
\]
provided $[C]\in\fCphys(\Phi)$. Let $\omega_+\in\Sigma_+(\omega,\Phi)$. Then the following statements are
equivalent:
\benr
\item $\omega_+\in\cSeq(\omega,\Phi)$;
\item $\omega_+\in\cSphys(\Phi)$.
\eenr
\eet

\proof
Suppose first that $\omega_+\in\cSeq(\omega,\Phi)$. Then
$\omega_+\in\cSeq(\beta_\eq\Phi)$. Since $E(\Phi)$ is a constant of the
motion and $\beta_\eq\Phi\in\cBsd\cap\cBR$, it follows that
$\omega_+\in\cSphys(\Phi)$.

Conversely, assuming that $\omega_+\in\cSphys(\Phi)$, one can choose 
$\Psi_+\in\cBsd\cap\cBR$ such that $[\Psi_+]\in\fC(\Phi)$ and
$\omega_+\in\cSeq(\Psi_+)$. By the assumption on $\omega_\eq$ and
conservation of $E(\Psi_+)$,
\[
\omega_\eq(E(\Psi_+))
=\omega(E(\Psi_+))
=\omega_+(E(\Psi_+)).
\]
Hence
\begin{align*}
0\le s(\omega_\eq|\omega_+)
&=-s(\omega_\eq)+\omega_\eq(E(\Psi_+))+P(\Psi_+)\\[1mm]
&=s(\omega_+)-s(\omega_\eq).
\end{align*}
Proposition~\ref{prop:st-pres} gives the reverse inequality, and therefore
$s(\omega_\eq|\omega_+)=0$. Regularity of $(\omega_+,\Psi_+)$ yields
$\omega_\eq\in\cSeq(\Psi_+)$. Since also
$\omega_\eq\in\cSeq(\beta_\eq\Phi)$ and
$\Psi_+,\beta_\eq\Phi\in\cBsd$, Theorem~\ref{thm:phys-eq} gives $\Psi_+\sim\beta_\eq\Phi$. 
It follows that $\omega_+\in\cSeq(\beta_\eq\Phi)$. Finally,
$\ce_\Phi(\omega_+)=\ce_\Phi(\omega)$ by energy conservation, and hence
$\omega_+\in\cSeq(\omega,\Phi)$.\vskip-8pt \qed

\printbibliography

\end{document}